\numberwithin{equation}{section}
\numberwithin{theorem}{section}
\spnewtheorem{thm}[theorem]{Theorem}{\bfseries}{\itshape}
\spnewtheorem{clm}{Claim}{\bfseries}{\itshape}
\spnewtheorem{cor}[theorem]{Corollary}{\bfseries}{\itshape}
\spnewtheorem{lem}[theorem]{Lemma}{\bfseries}{\itshape}
\spnewtheorem{prop}[theorem]{Proposition}{\bfseries}{\itshape}
\spnewtheorem{obs}[theorem]{Observation}{\bfseries}{\itshape}
\spnewtheorem{defn}[theorem]{Definition}{\bfseries}{\itshape}
\spnewtheorem{remarkn}[theorem]{Remark}{\bfseries}{\itshape}
\renewcommand{\vec}[1]{\mathbf{#1}}
\newcommand{\polylog}{\mathrm{polylog}}
\newcommand{\ignore}[1]{}
\def\longrightharpoonup{\relbar\joinrel\rightharpoonup}
\def\longleftharpoondown{\leftharpoondown\joinrel\relbar}
\def\longrightleftharpoons{\mathop{\vcenter{\hbox{\ooalign{\raise1pt\hbox{$\longrightharpoonup\joinrel$}\crcr\lower1pt\hbox{$\longleftharpoondown\joinrel$}}}}}}
\def\rxn{\mathop{\rightarrow}\limits}  
\newcommand{\relem}[2]{\paragraph{Lemma #1.}{\emph{#2}}}
\newcommand{\reclaim}[2]{\paragraph{Claim #1.}{\emph{#2}}}
\newcommand{\reobs}[2]{\paragraph{Observation #1.}{\emph{#2}}}
\newcommand{\vx}{\vec{x}}
\newcommand{\ve}{\vec{e}}
\newcommand{\vi}{\vec{i}}
\newcommand{\vp}{\vec{p}}
\renewcommand{\vb}{\vec{b}}
\newcommand{\vc}{\vec{c}}
\newcommand{\vy}{\vec{y}}
\newcommand{\vw}{\vec{w}}
\newcommand{\vz}{\vec{z}}
\newcommand{\vv}{\vec{v}}
\renewcommand{\Pr}{\mathsf{Pr}}
\renewcommand{\time}[2]{\mathsf{T}[#1 \reach #2]}
\def\reach{\mathop{\Longrightarrow}\nolimits}
\newcommand{\ldr}{\ell}
\begin{document}

\title{Stable Leader Election in Population Protocols Requires Linear Time
\thanks{A preliminary version of this article appeared as~\cite{LeaderElectionDISC}; the current version has been revised for clarity, and includes several omitted proofs.}}


    \author{
        David Doty
        \and
        David Soloveichik
    }
    \institute{
        David Doty
        \at
        Department of Computer Science\\
        University of California, Davis\\
        Davis, CA, USA\\
        \email{doty@ucdavis.edu}\\
        Author was supported by NSF grants CCF-1619343, CCF-1219274, and CCF-1162589 and the Molecular Programming Project under NSF grant 1317694.
    \and
        David Soloveichik
        \at
        Department of Electrical and Computer Engineering\\
        University of Texas, Austin\\
        Austin, TX, USA\\
        \email{david.soloveichik@utexas.edu}\\
        Author was supported by an NIGMS Systems Biology Center grant P50 GM081879 and NSF grant CCF-1618895.
    }
    \date{}


\maketitle

\begin{abstract}
  A population protocol \emph{stably elects a leader} if, for all $n$, starting from an initial configuration with $n$ agents each in an identical state, with probability 1 it reaches a configuration $\vy$ that is \emph{correct} (exactly one agent is in a special leader state $\ldr$) and \emph{stable} (every configuration reachable from $\vy$ also has a single agent in state $\ldr$).
  We show that any population protocol that stably elects a leader requires $\Omega(n)$ expected ``parallel time''---$\Omega(n^2)$ expected total pairwise interactions---to reach such a stable configuration.
Our result also informs the understanding of the time complexity of chemical self-organization
by showing an essential difficulty in generating exact quantities of molecular species quickly.
\end{abstract}


\newcommand{\ack}{
\paragraph{Acknowledgements.}
The authors thank Anne Condon and Monir Hajiaghayi for several insightful discussions.
We also thank the attendees of the 2014 Workshop on Programming Chemical Reaction Networks at the Banff International Research Station, where the first incursions were made into the solution of the problem of PP stable leader election.
We are also grateful to anonymous reviewers whose comments have significantly improved the presentation.}

\ack


\newcommand{\shrinkBeforePar}{\opt{conf}{\vspace{-5pt}}}

\section{Introduction}\label{sec:intro}

\paragraph{Background.}
Population protocols (PPs) were introduced by Angluin, Aspnes, Diamadi, Fischer, and Peralta\cite{angluin2006passivelymobile} as a model of distributed computing in which the agents have very little computational power and no control over their schedule of interaction with other agents.
They also can be thought of as a special case of Petri nets and vector addition systems\cite{petri1966communication,karp1969parallel}, which were introduced in the 1960s as a model of concurrent processing.
In addition to being an appropriate model for electronic computing scenarios such as mobile sensor networks, they are a useful abstraction of ``fast-mixing'' physical systems such as animal populations\cite{Volterra26},  chemical reaction networks, and gene regulatory networks\cite{bower2004computational}.

A PP is defined by a finite set $\Lambda$ of \emph{states} that each agent may have,\footnote{Some recent work on PPs\cite{TimeSpaceTradeoffsPP, polylogleaderICALP, alistarh2015fast, DBLP:conf/wdag/BeauquierBCS15, AspnesBBS2016, izumi2014space} allows the number of states to grow with the number of agents. 
This paper uses the original model~\cite{angluin2006passivelymobile} with state set that is constant with respect to the population size.
(See section ``Related work''.)} 
together with a \emph{transition function} $\delta: \Lambda \times \Lambda \to \Lambda \times \Lambda$.\footnote{Some work allows ``non-deterministic'' transitions, in which the transition function maps to subsets of $\Lambda \times \Lambda$.
Our results are independent of whether the transition function is deterministic or nondeterministic in this manner.}
Given states $r_1,r_2,p_1,p_2 \in \Lambda$, if $\delta(r_1,r_2)=(p_1,p_2)$ (denoted $r_1,r_2 \to p_1,p_2$) and a pair of agents in respective states $r_1$ and $r_2$ interact, then their states become $p_1$ and $p_2$.\footnote{In the most generic model, there is no restriction on which agents are permitted to interact. If one prefers to think of the agents as existing on nodes of a graph, then it is the complete graph $K_n$ for a population of $n$ agents.}
A \emph{configuration} of a PP is a vector $\vc\in\N^\Lambda$ describing, for each state $s \in \Lambda$, the \emph{count} $\vc(s)$ of how many agents are in state $s$.
Executing a transition $r_1,r_2 \to p_1,p_2$ alters the configuration by decrementing the counts of $r_1$ and $r_2$ by 1 each and incrementing $p_1$ and $p_2$ by 1 each.
Possibly some of $r_1,r_2,p_1,p_2$ are equal to each other, so the count of a state could change by $0$, $1$, or $2$.

Associated with a PP is a set of \emph{valid initial configurations} that we expect the PP to be able to handle.\footnote{The set of valid initial configurations for a ``self-stabilizing'' PP is $\N^\Lambda$, where leader election is provably impossible~\cite{angluin2006self}.
We don't require the PP to work if started in any possible configuration,
but rather allow potentially ``helpful'' initial configurations as long as they don't already have small count states (see ``$\alpha$-dense'' below).}
Agents interact in a pairwise manner and change state based on the transition function.
The next pair of agents to interact is chosen uniformly at random among the $n$ agents.
If no transition rule applies, the interaction is a ``null transition'' $r_1,r_2 \to r_1,r_2$, in which the agents interact but don't change state.
We count the expected number of \emph{interactions} until some event occurs, and then  define the ``parallel time'' until this event as the expected number of interactions divided by the number of agents $n$.
This measure of time is based on the natural parallel model where each agent participates in a constant number of interactions in one unit of time, hence $\Theta(n)$ total interactions are expected per unit time~\cite{angluin2006fast}.
In this paper all references to ``time'' refer to parallel time.

In order to define error-free computation in PPs, we rely on to the model of \emph{stable} computation~\cite{angluin2007computational}.
The model defines computation to be complete when the PP gets to
a configuration that is correct\footnote{What ``correct'' means depends on the task. For computing a predicate, for example, $\Lambda$ is partitioned into ``yes'' and ``no'' voters, and a ``correct'' configuration is one in which every state present has the correct vote.} 
and ``stable'' in the sense that no subsequent sequence of transitions can take the PP to an incorrect configuration.
The model of stable computation disallows error
even in an ``adversarial'' schedule of transitions:
we require that from every configuration reachable by \emph{any} sequence of transitions from the initial configuration, it is possible to reach to a correct stable configuration.
Since the configuration space is finite, stable computation is equivalent to requiring, under the randomized model, that a correct stable configuration is reached with probability $1$.
Thus, although it may appear at first glance that correctness and expected time are defined with respect to different models of transition sequences---adversarial vs random---due to this equivalence they are both seen to be definable in the randomized model.
This notion of stable computation is also equivalent to requiring that every \emph{fair} sequence of transitions reaches a correct stable configuration, where ``fair'' means that every configuration infinitely often reachable is infinitely often reached\cite{angluin2007computational}.
For the arguments of this paper,  the most convenient definition of stable computation is the first one, combinatorial in terms of reachability.


A PP works ``with a leader'' if there is a special ``leader'' state $\ldr$, and every valid initial configuration $\vi$ satisfies $\vi(\ldr)=1$.
This is in contrast to a uniform initial configuration ($\vi(x)=n$ for some state $x$ and $\vi(y)=0$ for all states $y \neq x$)
or an initial configuration only encoding the input ($\vi(x_i) = n_i$ for $i \in \{1,\ldots,k\}$ to represent any input $(n_1,n_2,\ldots,n_k) \in \N^k$).
It is known that the predicates $\phi:\N^k\to\{0,1\}$ stably computable by PPs are exactly the semilinear predicates, whether an initial leader is allowed or not~\cite{angluin2007computational}.
Although the initial leader does not alter the class of computable predicates,
it may allow faster computation.
For example, the fastest known PPs to stably compute semilinear predicates without a leader take as long as $\Theta(n)$ to converge.\footnote{See ``Open questions'' for the distinction between time to \emph{converge} and time to \emph{stabilize}.
In this paper, the time lower bound we prove is on stabilization.
}
In contrast, with a leader, it is known that any semilinear predicate can be stably computed with expected convergence time $O(\log^5 n)$~\cite{angluin2006fast}.
Thus, in certain circumstances, the presence of a initial leader seems to give PPs more computational power (e.g., to converge quickly).
Angluin, Aspnes, and Eisenstat~\cite{angluin2006fast} asked whether polylogarithmic time stable computation of semilinear predicates is possible without a leader; absent a positive answer, the presence of a leader appears to add power to the model.

\paragraph{Statement of main result.}
Motivated in part by the apparent speedup possible with an initial leader, we ask how quickly a leader may be elected from a configuration lacking one.
We pose the problem as follows: design a PP $\calP$ with two special states $x$ (the initial state) and $\ldr$ (the leader state, which may or may not be identical to $x$) such that, for every $n\in\N$, from the initial configuration $\vi_n$ defined as $\vi_n(x)=n$ and $\vi_n(y)=0$ for all other states $y$, has the following property.
For every configuration $\vc$ reachable from $\vi_n$, there is a configuration $\vy$ reachable from $\vc$ that \emph{has a stable leader},
meaning that in all configurations $\vy'$ reachable from $\vy$ (including $\vy$ itself), $\vy'(\ldr) = 1$. 

There is a simple $O(n)$ expected time PP for stable leader election, with (assuming $x \equiv \ldr$) the single transition $\ldr,\ldr \to \ldr,f$.
Our main theorem shows that \emph{every} PP that stably elects a leader requires time $\Omega(n)$ to reach a configuration with a stable leader; thus the previous PP is asymptotically optimal.
Section~\ref{sec:difficulty} discusses why some straightforward approaches to proving a time lower bound for leader election fail.

\paragraph{Composing leader election with other tasks.}
We have motivated the importance of leader election in population protocols, in part, by reference to tasks that seem to require a leader to be present initially.
However, a fast leader election protocol would only alleviate the need for an initial leader if the election could be composed with the subsequent task.
Simply combining all transitions for the leader election with the transitions for the task does not necessarily work:
prior to the conclusion of the leader election, the presence of multiple leaders may result in unintended transitions.

A number of ad-hoc schemes successfully compose leader election with computation that requires a leader. 
For example, in ref.~\cite{angluin2006passivelymobile}, 
the stable computation of the ``remainder protocol'' depends on leader election to stabilize on a single leader which collects the remainder information.
As leaders drop out they transfer their information to the surviving leaders.
However,
to facilitate composition of leader election with a subsequent task,
it would be easiest to have a leader election protocol with a stronger termination criterion than the ``stabilizing'' criterion we study, namely the ``terminating'' criterion in which the leader ``knows'' when it has been elected, 
and only then would it trigger the subsequent task to begin.
One way to formalize ``knowing when it has been elected'' is to require that the system \emph{never} has more than one leader.\footnote{If it were possible to detect when a leader election protocol such as $\ldr,\ldr \to \ldr,f$ has stabilized, in the sense that each agent carries a bit $\{g,s\}$ in which all agents start with $g$ and only transition to $s$ after a single $\ldr$ exists, then one could consider the product state $(\ldr,s)$ to be the ``true'' leader state, and $(\ldr,g)$ is considered only a ``candidate'' leader state that may have count $>1$ prior to the stabilization to a single $\ldr$.}
However, it is simple to prove that leader election with this terminating convention is impossible: 
Dividing the initial population in half and preventing the two halves from interacting, 
each half (being a valid initial configuration itself) must elect a separate leader, violating the requirement that its count never exceeds $1$.

Thus the question of how to systematically compose leader election with arbitrary downstream computation is itself open.
Nevertheless, we show that even if the protocol follows the more liberal stabilizing criterion, in which the leader need not ``know'' when it has been elected, stabilization to a single leader \emph{still} requires linear time.

\shrinkBeforePar
\paragraph{Multiple leader states, multiple leaders, and other initial configurations.}
A more general notion of leader election is to identify a subset $\Psi \subset \Lambda$ of states that are all considered leader states, and to require the PP to eventually reach a configuration $\vy$ in which $\sum_{\ldr \in \Psi} \vy(\ldr) = 1$, and this sum is 1 in every configuration reachable from $\vy$.
This corresponds more appropriately to how leader states actually coordinate computation in PPs: a leader agent must remember some state information in between transitions (hence it changes state while remaining the unique leader).
Our techniques actually show this stronger result as well (as explained in Section~\ref{sec:result-statement:locking}). 
Further, our result implies that a PP cannot elect any fixed quantity of leaders (e.g.~exactly $256$) or variable quantity of leaders under a fixed bound (e.g.~at most $256$) in sublinear expected time.

In the simplest formulation of the task of leader election, we always start with $n$ agents in state $x$ (as described above).
Can we capture more generally leader election from a configuration ``without a pre-existing leader''?
Intuitively, we want to exclude initial configurations with states present in small but non-zero count.
We can exclude such initial configurations,
but allow otherwise deliberately prepared starting conditions,
using the notion of $\alpha$-dense configurations: any state present in the initial configuration has count $\geq \alpha  n$.
Our general negative result (Theorem~\ref{thm:main-q-locking}) implies that even starting with best-case initial configurations,
as long as, for some constant $\alpha > 0$, they are all $\alpha$-dense,
sublinear time leader election is impossible.
An open question relates to weakening the notion of $\alpha$-dense (see below).

\opt{conf}{
\shrinkBeforePar
\paragraph{Why simple proofs fail.}
It is tempting to believe that our negative result could follow by a simple argument based on reasoning about the last transition to change the count of the leader.\footnote{Indeed, if we start with more than one leader, and no transition rule can produce a new leader, then we can easily prove the impossibility of sublinear time leader election as follows.
To quickly reduce from two leaders to one, the other agent's state must be numerous in the population.
Thus, the same transition could occur again, leaving us with no leaders.
}
However, as the following example illustrates, reasoning about the last transition to change the count of the leader is insufficient if some transition can produce a new leader.
Consider the following PP, with initial configuration $\vi$ given by
  $\vi(r) = n^{1/4}$,
  $\vi(x) = n - n^{1/4}$,
and transitions:
\noindent%
\begin{minipage}[t]{0.5\textwidth}%
\vspace{-18pt}%
\begin{align}
  r,r &\to \ldr,k \label{rxn-ldr-produce-1} \\
  r,k &\to k,k \label{rxn-ldr-produce-2}
\end{align}
\end{minipage}\hfill%
\begin{minipage}[t]{0.5\textwidth}%
\vspace{-18pt}%
\begin{align}
  x,k &\to k,k \label{rxn-ldr-produce-3} \\
  \ell,\ell &\to \ell,k \label{rxn-ldr-produce-4}
\end{align}%
\end{minipage}

\vspace{3pt}%
\noindent%
It can be shown (the analysis is presented in the full version of this paper) that this PP stably elects a leader in sublinear time $O(n^{1/2} \log n)$ from the above described non-$\alpha$-dense initial configuration.
Intuitively, it takes expected time $\Theta(n^{1/2})$ for transition~\eqref{rxn-ldr-produce-1} to occur for the first time, producing a single leader.
Transition~\eqref{rxn-ldr-produce-4} ensures that
if transition~\eqref{rxn-ldr-produce-1} occurs more than once,
the PP will eventually stabilize to a single leader.
However, with high probability, transitions~\eqref{rxn-ldr-produce-2} and~\eqref{rxn-ldr-produce-3} consume all $r$ and $x$ in $O(\log{n})$ time \emph{before}~\eqref{rxn-ldr-produce-1} executes a second time.
The probability is high enough that the overall expected time to reach a state with a stable leader is sublinear.
Although the above example does not violate our theorem (since it relies on a non-dense initial configuration), it shows that any proof of the main result cannot be based solely on reasoning about the final transition.
The proof must also effectively establish that configurations, such as the initial configuration of the above PP, cannot be reached with high probability in sublinear time.
}

\shrinkBeforePar
\paragraph{Chemical reaction networks.}
\opt{normal,sub}{
The main result and proof are stated in the language of PPs;
however, the result holds for more general systems that have PPs as a special case.
The discrete, stochastic chemical reaction network (CRN) model has been
extensively used in the natural sciences to model chemical kinetics in a well-mixed solution~\cite{Gillespie77}.
The CRN model is also used prescriptively for specifying the behavior of synthetic chemical systems~\cite{SolSeeWin10,chen2013programmable}.
A CRN is a finite set of \emph{species}  (corresponding to PP states) such as $X,Y,Z$, and \emph{reactions} (corresponding to PP transitions) such as $X+Y \rxn^{k_1} Z$ or $Y \rxn^{k_2} 2X + Z$.
CRNs can be thought of as a generalization of PPs in which spontaneous transitions are possible (unimolecular reactions), the transition may cause the number of agents to change (if the reaction has a different number of products than reactants), and each transition has an associated constant $k$ that affects its probability of being selected.

As an essential form of self-organization,
biological cells seem able to precisely control the count of certain molecules (centriole number~\cite{cunha2009zero} is a well studied example).
How  chemical systems transform  relatively uncontrolled  initial conditions to precisely controlled amounts of desired species is still not well understood.
Our negative result applied to CRNs\footnote{Our result holds for any CRN that obeys Theorem~\ref{thm:timer}, the precise constraints of which are specified in~\cite{DotyTCRN2014} (those constraints automatically apply to all PPs).
Importantly, to generalize to CRNs, we never assume that the count of agents is fixed, but rather use $n$ to indicate the initial count.
}
implies that generating with probability $1$ an exact positive count of a certain species, whether $1$ or $256$, is \emph{necessarily} slower ($\Omega(n)$ time) than, for example, destroying all molecules of the species (through the reaction $X \to \emptyset$),
which takes $O(\log n)$ time.
}

\opt{conf}{
The main result and proof are stated in the language of PPs;
however, the result holds for more general systems that have PPs as a special case.
The discrete, stochastic chemical reaction network (CRN) model has been
extensively used in the natural sciences to model chemical kinetics in a well-mixed solution~\cite{Gillespie77}, and the model is also used prescriptively for specifying the behavior of synthetic chemical systems~\cite{SolSeeWin10,chen2013programmable}.
As an essential form of self-organization,
biological cells seem able to precisely control the count of certain molecules (centriole number~\cite{cunha2009zero} is a well studied example).
How  chemical systems transform  relatively uncontrolled  initial conditions to precisely controlled amounts of desired species is still not well understood.
Our negative result applied to CRNs\footnote{Our result holds for any CRN that obeys Theorem~\ref{thm:timer}, the precise constraints of which are specified in~\cite{DotyTCRN2014} (those constraints automatically apply to all PPs).}
implies that generating with probability $1$ an exact positive count of a certain species, whether $1$ or $256$, is \emph{necessarily} slower ($\Omega(n)$ time) than, for example, destroying all molecules of the species (through the reaction $X \to \emptyset$),
which takes $O(\log(n))$ time.
}


\shrinkBeforePar
\paragraph{Open questions.}
Although we measure computation time with respect to stabilization---the ultimate goal of stable computation---some work uses a different goalpost for completion.
Consider a PP stably electing a leader, and one particular transition sequence that describes its history.
We can say the transition sequence \emph{converged}
at the point when the count of the leader is the same in every subsequently reached configuration (if the PP is correct, this count should be 1).
In contrast, recall that the point of stabilization is when the count of the leader is the same in every subsequently \emph{reachable} configuration (whether actually reached in the transition sequence or not).
Measuring time to stabilization in the randomized model, as we do here, measures the expected time until the probability of changing the output becomes $0$.
To help illustrate the difference between these two subtly different concepts, Section~\ref{sec:difficulty} shows some examples of PPs that converge before stabilizing.

Our proof shows only that stabilization must take expected $\Omega(n)$ time.
However, convergence could occur much earlier in a transition sequence than stabilization.
We leave as an open question whether there is a PP that stably elects a leader and converges in expected $o(n)$ time.
We reiterate that there are PPs that work with a leader to stably compute semilinear predicates with convergence time $O(\log^5 n)$~\cite{angluin2006fast}.
Thus if stable leader election can converge in expected sublinear time,
by coupling the two PPs it might be possible to achieve stable computation of arbitrary semilinear predicates with sublinear convergence time.

It should be noted that the optimal stabilization time for stably computing semilinear predicates, even with an initial leader, is still an open question.
The stably computing PPs converging in $O(\log^5 n)$ time~\cite{angluin2006fast} provably require expected time $\Omega(n)$ to stabilize, and it is unknown whether faster stabilization is possible.

Going beyond stable computation, the open question of Angluin, Aspnes, and Eisenstat~\cite{angluin2006fast} asks whether their efficient high-probability simulation of a space-bounded Turing machine by a PP could remove the assumption of an initial leader.
That simulation has some small probability $\epsilon > 0$ of failure, so if one could elect a leader with a small probability $\epsilon' > 0$ of error and subsequently use it to drive the simulation, 
by the union bound the total probability of error  would be at most $\epsilon + \epsilon'$ (i.e., still close to 0).
However, it remains an open question whether the necessary PP exists.

Our general negative result applies to $\alpha$-dense initial configurations.
However, is sublinear time stable leader election possible from other kinds of initial configurations that satisfy our intuition of not having preexisting leaders?
It is known, for example, that for each $0 < \varepsilon < 1$, an initial configuration with $\Theta(n)$ agents in one state and $\Theta(n^\varepsilon)$ in another state can elect a leader in expected time $O(\log^2 n)$ with high probability\cite{angluin2006fast}, although this protocol has a positive probability of failure.
\opt{normal,sub}{In Section~\ref{sec:difficulty} we give an example PP that stably elects a leader in $O(n^{1/2} \log n)$ time starting from an initial configuration with $\Theta(n)$ agents in one state and $\Theta(n^{1/4})$ in another state.}
\opt{conf}{Above we give an example PP that stably elects a leader (convergence and stabilization) in $O(n^{1/2} \log n)$ time starting from an initial configuration with $\Theta(n)$ agents in one state and $\Theta(n^{1/4})$ in another state.}
In general we want to better characterize the initial configurations for which sublinear time leader election is possible.

\shrinkBeforePar
\paragraph{Related work.}
Alistarh and Gelashvili\cite{polylogleaderICALP} showed that relaxing the requirement of $O(1)$ states to $O(\log^3 n)$ states allows for a leader to be stably elected
in expected time $O(\log^3 n)$.
(Indeed, our proof technique fails if the number of states is not constant with respect to $n$.)
Alistarh, Aspnes, Eisenstat, Gelashvili, and Rivest\cite{TimeSpaceTradeoffsPP} have further refined our understanding of PPs with non-constant states by showing:
(1) a time complexity lower bound: even with up to $O(\log \log n)$ states, any stable leader election protocol still requires ``near linear'' ($\Omega(n / \polylog\ n)$) expected time to stabilize, and
(2) a time complexity upper bound: reducing the $O(\log^3 n)$ state requirement of\cite{polylogleaderICALP} to $O(\log ^2 n)$, at the cost of requiring $O(\log^9 n)$ expected time to stabilize.
They are furthermore able to apply these techniques to show similar lower and upper bounds for the majority problem\cite{AngluinAE2008majority}, where the initial majority among a population of states $x$ and $y$ should eventually occupy the whole population.

Whether a constant bound on the number of states is appropriate depends upon the situation being modeled by the PP.
In some settings---e.g., sensor networks---it is reasonable that employing larger ``swarms'' may be helped by slightly increasing the memory per sensor (say, logarithmically with $n$).
However, when modeling biological regulatory networks, for example, each state corresponds to an existing chemical species, 
and $O(1)$ states is natural.


\shrinkBeforePar
\section{Preliminaries}

\shrinkBeforePar
If $\Lambda$ is a finite set (in this paper, of \emph{states}), we write $\N^\Lambda$ to denote the set of functions $\vc:\Lambda \to \N$.
Equivalently, we view an element $\vc\in\N^\Lambda$ as a vector of $|\Lambda|$ nonnegative integers, with each coordinate ``labeled'' by an element of $\Lambda$.
Given $s \in \Lambda$ and $\vc \in \N^\Lambda$, we refer to $\vc(s)$ as the \emph{count of $s$ in $\vc$}.
Let $\|\vc\| = \| \vc \|_1 = \sum_{s\in\Lambda} \vc(s)$ denote the total number of agents.
We write $\vc \leq \vc'$ to denote that $\vc(s) \leq \vc'(s)$ for all $s \in \Lambda$. 
Since we view vectors $\vc\in\N^\Lambda$ equivalently as multisets of elements from $\Lambda$, if $\vc \leq \vc'$ we say $\vc$ is a \emph{subset} of $\vc'$.
It is sometimes convenient to use multiset notation to denote vectors, e.g., $\{x,x,y\}$ and $\{2x,y\}$ both denote the vector $\vc$ defined by $\vc(x)=2$, $\vc(y)=1$, and $\vc(z)=0$ for all $z \not\in \{x,y\}$.
Given $\vc,\vc' \in \N^\Lambda$, we define the vector component-wise operations of addition $\vc+\vc'$, subtraction $\vc-\vc'$, and scalar multiplication $m \vc$ for $m \in \N$.
For a set $\Delta \subset \Lambda$, we view a vector $\vc \in \N^\Delta$ equivalently as a vector $\vc \in \N^\Lambda$ by assuming $\vc(s)=0$ for all $s \in \Lambda \setminus \Delta.$

The following lemma is used frequently in reasoning about population protocols. 

\begin{lem}[Dickson's Lemma~\cite{dickson}]
\label{lem:dickson}
Any infinite sequence $\vx_0,\vx_1,\ldots \in \N^k$ has an infinite nondecreasing subsequence $\vx_{i_0} \leq \vx_{i_1} \leq \ldots$, where $i_0 < i_1 < ... \in \N$. 
\end{lem}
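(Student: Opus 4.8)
The plan is to prove Dickson's Lemma by induction on the dimension $k$, peeling off one coordinate at a time. The engine is the one-dimensional case, which I would establish first and then reuse inside the inductive step.

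\textbf{Base case ($k=1$).} I would show directly that any infinite sequence $x_0, x_1, \ldots \in \N$ contains an infinite nondecreasing subsequence, via a dichotomy. If some value $v \in \N$ occurs as $x_i$ for infinitely many indices $i$, those indices furnish an infinite constant---hence nondecreasing---subsequence. Otherwise every value of $\N$ occurs only finitely often; then for each bound $b$ only finitely many indices $i$ have $x_i \le b$ (there are only $b+1$ candidate values, each used finitely often). This lets me greedily build a strictly increasing subsequence: starting from any $x_{i_0}$, once $x_{i_0} < \cdots < x_{i_j}$ have been chosen, infinitely many later indices carry a value exceeding $x_{i_j}$, so I take the smallest such index as $i_{j+1}$.

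\textbf{Inductive step.} Assuming the statement for $k$, I would consider $\vx_0, \vx_1, \ldots \in \N^{k+1}$ together with its projection $\vy_i = (\vx_i(1), \ldots, \vx_i(k)) \in \N^k$ onto the first $k$ coordinates. The inductive hypothesis yields indices $j_0 < j_1 < \cdots$ with $\vy_{j_0} \le \vy_{j_1} \le \cdots$. Now the last coordinates $\vx_{j_0}(k+1), \vx_{j_1}(k+1), \ldots$ form an infinite sequence in $\N$, so the base case extracts a nondecreasing sub-subsequence of them, indexed by some $i_0 < i_1 < \cdots$ drawn from the $j$'s. Along these final indices the sequence is nondecreasing in the first $k$ coordinates (inherited, as they are a subsequence of the already-nondecreasing $\vy$ subsequence) and in coordinate $k+1$ (by the second extraction), hence nondecreasing in all $k+1$ coordinates, as required.

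The argument is essentially routine; the only points requiring care are the case split in the base case---in particular handling an unbounded sequence in which no single value repeats infinitely often---and verifying that the two successive extractions compose correctly, i.e., that restricting to a subsequence of the $\vy$-subsequence cannot destroy the monotonicity already secured in the first $k$ coordinates. The latter holds because monotonicity of a sequence is preserved under passing to any subsequence, which is what makes the coordinate-by-coordinate peeling go through.
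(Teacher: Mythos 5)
Your proof is correct. Note that there is no in-paper argument to compare against: the paper invokes Dickson's Lemma as a classical result with a citation and never proves it. Your argument---induction on the dimension $k$, with the one-dimensional case settled by the dichotomy (some value recurs infinitely often, giving a constant subsequence, versus every value occurring finitely often, giving a greedy strictly increasing subsequence) and the inductive step given by two successive extractions---is the standard textbook proof, and the two points you single out for care are exactly the right ones: the base-case split is exhaustive, and monotonicity with respect to the componentwise partial order is preserved under passing to subsequences by transitivity of $\leq$, which is what lets the coordinate-by-coordinate peeling compose. One could shorten the base case slightly (e.g., pick an index attaining the minimum value of the sequence, then recurse on the tail after it), but your version is equally valid.
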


\opt{normal}{\subsection{Population Protocols}}

A \emph{population protocol (PP)} is a pair $\calP=(\Lambda,\delta)$,\footnote{We give a slightly different formalism than that of~\cite{angluin2007computational} for population protocols.
The main difference is that since we are not deciding a predicate, there is no notion of inputs being mapped to states or states being mapped to outputs.
Another difference is that we assume the transition function is symmetric (so there is no notion of a ``sender'' and ``receiver'' agent as in~\cite{angluin2007computational}; the unordered pair of states completely determines the next pair of states). 
However,
the results of this paper hold even if we allow the transition function to be non-symmetric or even to be non-deterministic (allowing transitions such as $a,b \to c,d$ and $a,b \to x,y$ to coexist).}
where $\Lambda$ is a finite set of \emph{states}, and $\delta:\Lambda \times \Lambda \to \Lambda \times \Lambda$ is the (symmetric) \emph{transition function}.
A \emph{configuration} of a PP is a vector $\vc \in \N^\Lambda$, with the interpretation that $\vc(s)$ agents are in state $s$.
By convention, the value $n\in\Z^+$ represents the total number of agents $\|\vc\|$.
A \emph{transition} is a 4-tuple $\alpha = (r_1,r_2,p_1,p_2) \in \Lambda^4$,
written $\alpha: r_1,r_2 \to p_1,p_2$,
 such that $\delta(r_1,r_2)=(p_1,p_2)$.
This paper typically defines a PP by a list of transitions, with $\delta$ implicit. 
If an agent in state $r_1$ interacts with an agent in state $r_2$, then they change states to $p_1$ and $p_2$.
For every pair of states $r_1,r_2$ without an explicitly listed transition $r_1,r_2 \to p_1,p_2$, there is an implicit \emph{null} transition $r_1,r_2 \to r_1,r_2$ in which the agents interact but do not change state.

\newcommand{\inputvector}{\vv_{\mathrm{i},\alpha}}
\newcommand{\outputvector}{\vv_{\mathrm{o},\alpha}}
\newcommand{\transitionvector}{\vv_{\alpha}}

More formally, given a configuration $\vc$ and transition $\alpha: r_1,r_2 \to p_1,p_2$, we say that $\alpha$ is \emph{applicable} to $\vc$ if $\vc \geq \{r_1,r_2\}$, i.e., $\vc$ contains 2 agents, one in state $r_1$ and one in state $r_2$.
If $\alpha$ is applicable to $\vc$, then write $\alpha(\vc)$ to denote the configuration $\vc - \{r_1,r_2\} + \{p_1,p_2\}$ (i.e., the configuration that results from applying $\alpha$ to $\vc$); otherwise $\alpha(\vc)$ is undefined.
A finite or infinite sequence of transitions $(\alpha_i)$ is a \emph{transition sequence} (or \emph{path}).
Applying a finite or infinite transition sequence $(\alpha_i)$ starting at configuration 
$\vc_0$ induces a finite or infinite sequence of configurations $(\vc_0, \vc_1, \ldots)$ such that, for all $\vc_i$ $(i \geq 1)$, $\vc_i = \alpha_{i-1}(\vc_{i-1})$.\footnote{When the initial configuration to which a transition sequence is applied is clear from context, we may overload terminology and refer to $(\vc_0, \vc_1, \ldots)$ as a transition sequence or path.}
If a finite transition sequence $q$, when applied to the starting configuration $\vc$, ends with $\vc'$, we write $\vc \reach_q \vc'$.
We write $\vc \reach \vc'$ if such a transition sequence exists (i.e., it is possible for the system to reach from $\vc$ to $\vc'$) and we say that $\vc'$ is \emph{reachable} from $\vc$.
If it is understood from context what is the initial configuration $\vi$, then say $\vc$ is simply \emph{reachable} if $\vi \reach \vc$.
Note that this notation omits mention of $\calP$; we always deal with a single PP at a time, so it is clear from context which PP is defining the transitions.
If a transition $\alpha: r_1,r_2 \to p_1,p_2$ has the property that for $i\in\{1,2\}$, $r_i\not\in\{p_1,p_2\}$, or if ($r_1=r_2$ and ($r_i \neq p_1$ or $r_i \neq p_2$)), then we say that $\alpha$ \emph{consumes} $r_i$.
In other words, applying $\alpha$ reduces the count of $r_i$.
We similarly say that $\alpha$ \emph{produces} $p_i$ if it increases the count of $p_i$.

\shrinkBeforePar
\opt{normal}{\subsection{Time Complexity}}

In any configuration the next interaction is chosen by selecting a pair of agents uniformly at random and applying transition function $\delta$.
To measure time we count the expected total number of interactions (including null transitions such as $a,b \to a,b$ in which the agents interact but do not change state),
and divide by the number of agents $n$.
(In the population protocols literature, this is often called ``parallel time''; i.e. $n$ interactions among a population of $n$ agents corresponds to one unit of time).
Let $\vc\in\N^\Lambda$ and $C \subseteq \N^\Lambda$.
Denote the probability that the PP reaches from $\vc$ to some configuration $\vc'\in C$ by $\Pr[\vc \reach C]$.
If $\Pr[\vc \reach C]=1$,\footnote{Since PP's have a finite reachable configuration space, this is equivalent to requiring that for all $\vx$ reachable from $\vc$, there is a $\vc' \in C$ reachable from $\vx$.}
define the \emph{expected time to reach from  $\vc$ to $C$}, denoted $\time{\vc}{C}$, to be the expected number of interactions to reach from $\vc$ to some $\vc' \in C$, divided by the number of agents $n$.

\shrinkBeforePar
\section{Main Results}
\label{sec:result-statement}

\shrinkBeforePar
\subsection{Impossibility of Sublinear Time Stable Leader Election}
\label{subsec:stable-leader-election-defn}
We consider the following \emph{stable leader election} problem.
Suppose that each PP $\calP=(\Lambda,\delta)$ we consider has a specially designated state $\ldr\in\Lambda$, which we call the \emph{leader state}.
Informally, the goal of stable leader election is to be guaranteed to reach a configuration with count 1 of $\ldr$ (a leader has been ``elected''), from which no transition sequence can change the count of $\ldr$ (the leader is ``stable'').
We also assume there is a special initial state $x$ (it could be that $x \equiv \ldr$ but it is not required), such that the only valid initial configurations $\vi$ are of the form $\vi(x) > 0$ and $\vi(y) = 0$ for all states $y \in \Lambda \setminus \{x\}$.
We write $\vi_n$ to denote such an initial configuration with $\vi_n(x) = n.$

\begin{defn}\label{defn:stable}
A configuration $\vy$ is \emph{stable} if, for all $\vy'$ such that $\vy \reach \vy'$, $\vy'(\ldr)= \vy(\ldr)$ (in other words, after reaching $\vy$, the count of $\ldr$ cannot change); $\vy$
is said to have a \emph{stable leader} if it is stable and $\vy(\ldr)=1$.
\end{defn}

The following definition captures our notion of stable leader election.
It requires the PP to be ``guaranteed'' eventually to reach a configuration with a stable leader.

\shrinkBeforePar
\begin{defn}\label{defn:stably-elect-leader}
We say a PP \emph{stably elects a leader} if, for all $n\in\Z^+$, for all $\vc$ such that $\vi_n \reach \vc$, there exists $\vy$ with a stable leader such that $\vc \reach \vy$.
\end{defn}

\shrinkBeforePar
In other words, letting $Y$ denote the set of configurations with a stable leader, every reachable configuration can reach to $Y$.
It is well-known~\cite{angluin2007computational} that Definition~\ref{defn:stably-elect-leader} is equivalent to requiring 
$\Pr[\vi_n \reach Y]=1$ for all $n\in\Z^+$.

We note that our PP model captures anonymous nodes defined on a complete communication graph.
Thus, agents in the same state are truly indistinguishable.
Consequently, our formalism does not discern whether the agent that becomes the single leader stays the leader, or whether the leader state moves among agents (reminiscent of token passing).

\shrinkBeforePar
\begin{defn}
Let $t:\Z^+\to\R^+$, and let $Y$ be the set of all configurations with a stable leader.
We say a PP stably elects a leader \emph{in time $t(n)$} if, for all $n \in \Z^+$, $\time{\vi_n}{Y} \leq t(n).$
\end{defn}

\shrinkBeforePar
Our main theorem says that stable leader election requires at least linear time to stabilize:

\shrinkBeforePar
\begin{thm}\label{thm:main}
  If a PP stably elects a leader in time $t(n)$, then $t(n) = \Omega(n).$
\end{thm}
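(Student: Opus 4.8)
The plan is to prove the contrapositive by contradiction: assume the PP stably elects a leader in time $t(n)=o(n)$ and manufacture a configuration with two leaders that is reachable from a \emph{stable} (hence permanently single-leader) configuration. Writing $Y$ for the set of configurations with a stable leader, the assumption is $\time{\vi_n}{Y}=o(n)$. First I would invoke the bottleneck/timer result (Theorem~\ref{thm:timer}) to restrict attention to ``fast'' histories. Call a transition a \emph{bottleneck} if, when it fires, both reactant states have count below a fixed threshold $b$; the timer theorem says that if the only route from a configuration to $Y$ must cross a bottleneck, then reaching $Y$ costs expected time $\Omega(n)$. Since $\time{\vi_n}{Y}=o(n)$, for each large $n$ there exists a stable-leader configuration $\vy_n$ reachable from $\vi_n$ by a bottleneck-free path $Q_n$, i.e.\ one in which every applied transition has a reactant present in count $\ge \gamma n$ for a constant $\gamma>0$. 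Partition $\Lambda$ into the states \emph{abundant} in $\vy_n$ (count $\ge \alpha n$ for a suitable constant $\alpha$) and the \emph{scarce} ones; note $\ldr$ is scarce, since $\vy_n(\ldr)=1$.

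Next I would make this abundant structure uniform in $n$. As $\Lambda$ is finite there are only finitely many possible abundant sets, so by pigeonhole a fixed $\Delta\subseteq\Lambda$ is the abundant set of $\vy_n$ for infinitely many $n$; restrict to that infinite subsequence. Applying Dickson's Lemma (Lemma~\ref{lem:dickson}) to $(\vy_n)$ along the subsequence produces two indices $m<n$ with $\vy_m\le\vy_n$, both stable with a single leader and both having abundant set $\Delta$. The difference $\vd=\vy_n-\vy_m\ge 0$ then satisfies $\vd(\ldr)=0$ and is supported on $\Delta$, contributing a ``surplus'' of $\|\vd\|=n-m$ abundant agents sitting on top of a copy of $\vy_m$ inside $\vy_n$.

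The crux is a reachability-surgery establishing $\vy_n\reach\vw$ for some $\vw$ with $\vw(\ldr)\ge 2$, which contradicts the stability of $\vy_n$. The intended mechanism exploits that $Q_m$ elects a leader from $\vi_m$ using only bottleneck-free transitions, and that $\vy_n$ carries, beyond a copy of $\vy_m$, the surplus $\vd$ of roughly $\alpha(n-m)$ abundant agents. The lemma I would try to prove is that bottleneck-free reachability is robust to embedding in an abundant context: a bottleneck-free segment can be re-enacted inside a larger configuration as long as every abundant reactant it needs is present in sufficient count, precisely because such a segment never requires two simultaneously-scarce agents. Feeding the surplus $\vd$ in as the fuel and catalysts for such a re-enactment, one would re-run the leader-producing portion of the history a second time, producing an extra leader while leaving the original leader of $\vy_n$ untouched, so that $\vw(\ldr)\ge 2$.

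I expect this surgery to be the main obstacle, and it hides several genuine difficulties. (i) ``Abundant in $\vy_n$'' must be strengthened to ``abundant throughout the re-enacted segment,'' so that the surplus truly suffices at every step; this forces mutually consistent choices of the thresholds $b,\gamma,\alpha$ and may require tracking abundance dynamically along $Q_n$. (ii) The scarce intermediate species spawned by the duplicated copy must not collide destructively with the first copy or with the stable structure of $\vy_n$. (iii) Most delicate of all, the surplus $\vd$ consists only of abundant species, \emph{not} of fresh copies of the initial state $x$, so it is not obvious that $\vd$ encodes enough to seed a second leader rather than merely riding along inertly; the example in Section~\ref{sec:difficulty} already warns that naive ``last transition'' reasoning fails, which suggests this step cannot be purely local. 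Converting the slogan ``extra abundant agents let you run the leader-producing dynamics a second time'' into a bona fide transition sequence is exactly the technical heart of the argument.
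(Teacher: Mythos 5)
You have correctly located the crux of the problem---your item (iii)---but it is a genuine gap, not a removable technicality, and the mechanism you propose for it fails. Your surgery must exhibit a forward transition sequence from $\vy_n$ to a two-leader configuration, and the only fuel you supply is the surplus $\vd=\vy_n-\vy_m$, supported on abundant states. But a bottleneck-free path only guarantees that \emph{one} reactant of each transition is above the threshold when it fires; the other reactant may be scarce. Consequently the ``leader-producing portion'' of $Q_n$ consumes scarce intermediates that are neither in $\vd$ nor, in general, present in $\vy_n$ (they may have count $0$ there, having been used up during the history). A re-enactment starting at $\vy_n$ cannot regenerate them: producing scarce states out of abundant ones is exactly the problem you are trying to solve, so the argument is circular. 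The examples in Section~\ref{sec:difficulty} illustrate the point: in their stable configurations the abundant state $k$ is completely inert, so an abundant surplus by itself carries no leader-producing power. Two smaller inaccuracies: the $\Omega(n)$ cost of bottlenecked routes is Observation~\ref{obs:bottleneck-linear-time}/Corollary~\ref{cor:bottlebeck-linear-time}, not Theorem~\ref{thm:timer} (which is the high-probability density result); and bottleneck-freeness yields a reactant of count $>b$ for a \emph{constant} threshold $b$ (the paper's $m$, with $n\gg m$), not count $\geq\gamma n$ as you assert.

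The paper's proof inverts your surgery in two essential ways, and both inversions are what make it work. First, the target: rather than reaching two leaders from a stable configuration, it reaches---from a doubled initial configuration $2\vi'$---a \emph{stable} configuration $\vv^\Gamma$ with \emph{zero} leaders, contradicting Definition~\ref{defn:stably-elect-leader}. That target can be certified stable: any configuration supported on $\Gamma=\unbdd((\vy_m))$ is a subset of $\vy_{m'}$ for large $m'$, and stability is closed downward; no analogous certificate exists for ``two leaders are reachable from $\vy_n$.'' Second, the mechanics: instead of appending a re-enactment, the paper performs surgery on the fast path $p_m$ itself, appending consuming transitions to drive every bounded state to count $0$ (Claim~\ref{claim:firstfixing}) and, crucially, \emph{removing} instances of transitions that occurred in $p_m$, which is the only available way to ``produce'' scarce states (Claim~\ref{claim:secondfixing}). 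Both steps rest on the Transition Ordering Lemma (Lemma~\ref{lem:ordering}), which supplies, for the bounded states $d_1,\ldots,d_k$, consuming transitions $\alpha_i:d_i,s_i\to o_i,o_i'$ that are non-circular ($s_i,o_i,o_i'\notin\{d_1,\ldots,d_i\}$) and that each occur at least $(b_2-|\Lambda|\cdot b_1)/|\Lambda|^2$ times in $p_m$, so that enough instances exist to delete; the second copy of $\vx_m$ obtained by doubling $\vi'$ serves as the context $\vp$ that keeps all counts nonnegative during the splice. Your Dickson-plus-pigeonhole step provides no counterpart to this lemma, and without it neither the production of scarce states nor the non-circularity of the fixing can be established.
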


\shrinkBeforePar
Thus a PP that elects a leader in sublinear time cannot do so stably, i.e., it must have a positive probability of failure.

The high-level strategy to prove Theorem~\ref{thm:main} is as follows.
With high probability the PP initially goes from configuration $\vi_n$ to configuration $\vx_n$, such that in the sequence $(\vx_n)$ for increasing population size $n$, the count of each state grows without bound as $n \to \infty$ (indeed the count of each state grows linearly with $n$); this follows from Theorem~\ref{thm:timer}.
We then show that any such configuration must have an ``$O(1)$-bottleneck transition''  before reaching a configuration with a stable leader (informally this means that every transition sequence from $\vx_n$ to a configuration $\vy$ with a stable leader must have a transition in which both input states have count $O(1)$, depending on the PP but not on $n$).
Since it takes expected time $\Omega(n)$ to execute a transition when both states have constant count, from any such configuration it requires linear time to stably elect a leader.
Since one of these configurations is reached from the initial configuration with high probability, those configurations' contribution to the overall expected time dominates, showing that the expected time to stably elect a leader is linear.

\shrinkBeforePar
\subsection{More General Impossibility Result in Terms of Inapplicable Transitions and Dense Configurations}
\label{sec:result-statement:locking}

\newcommand{\bdd}{\mathsf{bdd}}
\newcommand{\unbdd}{\mathsf{unbdd}}

\shrinkBeforePar
Rather than proving Theorem~\ref{thm:main} using the notion of leader stability directly, we prove a more general result concerning the notion of a set of inapplicable transitions.
We generalize in two ways.
(1) A configuration $\vy$ is stable by Definition~\ref{defn:stable} if no transition altering the count of $\ldr$ is applicable in any configuration reachable from $\vy$; Definition~\ref{defn:q-stable} generalizes this to an arbitrary subset $Q$ of transitions.
(2) The valid initial configurations of Section~\ref{subsec:stable-leader-election-defn} are those with $\vi_n(x)=n$ and $\vi_n(y) = 0$ for all $y \in \Lambda \setminus \{x\}$; Theorem~\ref{thm:main-q-locking} generalizes this to any set $I$ of  configurations that are all ``$\alpha$-dense'':
any state can be initially present as long as it is present in ``large count'' (at least a constant fraction of the population; see Definition~\ref{def:alpha-dense}).
We also require the allowed initial configurations to satisfy a weak sort of ``closure under doubling'' property: namely, that 
there is an infinite subset $I' \subseteq I$ such that $2I' \subseteq I$. 
In other words, there are infinitely many $\vi \in I$ such that $2 \vi \in I$.
This is true if $I$ is closed under addition, 
for example if $I$ is the set of uniform initial configurations,
or if $I$ is the set of all $\alpha$-dense configurations for some $\alpha > 0$ (since doubling a configuration preserves its density).


\shrinkBeforePar
\begin{defn}\label{defn:q-stable}
Let $Q$ be a set of transitions.
A configuration $\vy \in \N^{\Lambda}$ is said to be \emph{$Q$-stable} if no transition in $Q$ is applicable in any configuration reachable from $\vy$.
\end{defn}

Let $I \subseteq \N^\Lambda$ and $Q$ be a set of transitions.
Let $Y$ be the set of $Q$-stable configurations reachable from some configuration in $I$.
We say that a PP $\calP=(\Lambda,\delta)$ \emph{$Q$-stabilizes from $I$} if, for any $\vi \in I$, $\Pr[\vi \reach Y] = 1$.\footnote{Recall that the condition $\Pr[\vi \reach Y]=1$ is equivalent to $[(\forall \vc \in \N^\Lambda)\ \vi \reach \vc$ implies $(\exists \vy \in Y)\ \vc \reach \vy]$.}
If $I$ and $Q$ are understood from context, we say that $\calP$ \emph{stabilizes}.
For a time bound $t(n)$, we say that $\calP$ stabilizes \emph{in expected time $t(n)$} if, for all $\vi \in I$ such that $\|\vi\|=n$, $\time{\vi}{Y} \leq t(n)$.

\begin{defn} \label{def:alpha-dense}
Let $0 < \alpha \leq 1$.
We say that a configuration $\vc$ is \emph{$\alpha$-dense} if for all $s\in\Lambda$,  $\vc(s) > 0$ implies that $\vc(s) \geq \alpha \|\vc\|$, i.e., all states present in $\vc$ occupy at least an $\alpha$ fraction of the total count of agents.
\end{defn}

In order to reason about the behavior of PPs for larger and larger population sizes, we consider infinite sequences of configurations that ``slice'' across different population counts $n$.
In other words, these sequences consist of configurations satisfying certain criteria that are reachable from ever-larger initial configurations.
(Note that such configurations are not reachable from each other since they have different numbers of agents.)
When $C$ spans infinitely many population sizes, the following definition expresses a basic distinction in how state counts scale with increasing population size:

\begin{defn}
For an (infinite) set/sequence of configurations $C$, let $\bdd(C)$ be the set of states 
$$\setr{s \in \Lambda}{(\exists b\in\N)(\forall \vc \in C)\ \vc(s) < b}.$$
Let $\unbdd(C) = \Lambda \setminus \bdd(C)$.
\end{defn}

\begin{remark}
Note that if $C=(\vc_m)$ is a nondecreasing sequence, then for all $k\in\N$, there is $\vc_m$ such that for all $s \in \unbdd(\vc_m)$, $\vc_m(s) \geq k$. (Note that if $C$ is not nondecreasing, the conclusion can fail; e.g., $\vc_m(s_1)=m, \vc_m(s_2)=0$ for $m$ even and $\vc_m(s_1)=0, \vc_m(s_2)=m$ for $m$ odd.)
\end{remark}

The following is our most general theorem, which the rest of the paper is devoted to proving.


\shrinkBeforePar
\begin{thm}\label{thm:main-q-locking}
  Let $\calP=(\Lambda,\delta)$ be a PP, $Q$ be any subset of transitions of $\calP$, $\alpha > 0$, and  $I\subseteq\N^\Lambda$ be a set of $\alpha$-dense initial configurations 
  such that 
  there is an infinite subset $I' \subseteq I$ such that $2I' \subseteq I$.
  Let $Y$ be the set of all $Q$-stable configurations reachable from $I$. 
  Suppose $\calP$ $Q$-stabilizes from $I$ in expected time $o(n)$.
  Then there are infinitely many $\vv\in Y$ such that $\forall s \in \bdd(Y)$, $\vv(s)=0$.
\end{thm}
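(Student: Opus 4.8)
The plan is to prove the contrapositive in spirit: fast stabilization leaves $\calP$ unable to cross any ``bottleneck'' transition, and the resulting bottleneck-free paths to $Y$ can be surgically recombined into a $Q$-stable configuration that avoids the $\bdd(Y)$ states entirely. I would build the argument from three pillars: (i) a timer/concentration step that funnels the dynamics through dense configurations; (ii) a bottleneck time lower bound that converts the $o(n)$ expected stabilization hypothesis into the \emph{existence} of bottleneck-free paths to $Y$; and (iii) a combinatorial surgery that uses Dickson's Lemma (Lemma~\ref{lem:dickson}) together with the closure $2I' \subseteq I$ to drive the bounded states to zero.

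First I would invoke Theorem~\ref{thm:timer} and the $\alpha$-density of $I$ to produce, with high probability in $o(n)$ time, a sequence of reachable configurations $\vx_n$ whose counts on a fixed support all grow without bound as $n \to \infty$. Since the total expected stabilization time is $o(n)$ and these dense configurations are reached quickly with high probability, the residual expected time $\time{\vx_n}{Y}$ is still $o(n)$. Next I would establish the key time estimate: if \emph{every} transition sequence from a configuration $\vc$ with $\|\vc\|=n$ to $Y$ must apply some transition whose two input states both have count below a constant $b$ (a ``$b$-bottleneck''), then $\time{\vc}{Y} = \Omega(n/b^2)$, since any fixed such transition fires with probability $O(b^2/n^2)$ per interaction. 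Contraposing against $\time{\vx_n}{Y} = o(n)$ gives, for every constant $b$ and all large $n$, a $b$-bottleneck-free path $\vx_n \reach \vy_n$ with $\vy_n \in Y$; choosing $b=1,2,\ldots$ and a correspondingly increasing $n_b$ lets me send $b \to \infty$ along a single sequence of configurations $\vy_{n_b} \in Y$, each reached by an increasingly bottleneck-free path from a dense configuration.

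The surgery in pillar (iii) is the heart of the argument, and I expect it to be the main obstacle. Applying Dickson's Lemma to $(\vy_{n_b})_b$ yields a nondecreasing subsequence $\vy_{n_1} \le \vy_{n_2} \le \cdots$ of $Q$-stable configurations; because the $\bdd(Y)$ states are bounded by definition, their counts are eventually constant along this subsequence, while the $\unbdd(Y)$ counts grow without bound. I would then exploit two facts simultaneously: each $\vy_{n_k}$ is reached from a dense configuration by a nearly bottleneck-free path, so every transition used has a reactant available in unboundedly high count; and by $2I' \subseteq I$ I may supply a second, disjoint copy of a valid initial configuration to serve as a reservoir of $\unbdd(Y)$-agents. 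The goal is to reroute such a path—using the surplus $\vy_{n_{k+1}}-\vy_{n_k}$, which is supported entirely on $\unbdd(Y)$—so that the segments consuming the $\bdd(Y)$ states are pushed to completion. Bottleneck-freeness is exactly what keeps all intermediate counts nonnegative throughout this rerouting, since the partners needed for each consuming transition are present in high count.

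The delicate point, and the crux of the whole proof, is to carry out the rerouting so that the terminal configuration $\vv$ is \emph{both} still $Q$-stable (removing the bounded states must not newly enable any transition of $Q$ in $\vv$ or in anything reachable from it) \emph{and} genuinely satisfies $\vv(s)=0$ for every $s \in \bdd(Y)$; a transition-ordering argument on the order in which bounded states are depleted along the path is where the real work lies. Once this is done, infinitude is immediate: since the $\unbdd(Y)$ counts can be taken arbitrarily large along the subsequence, the construction yields configurations $\vv$ with arbitrarily large $\unbdd(Y)$-counts, hence infinitely many distinct $\vv \in Y$ with $\vv(s)=0$ for all $s \in \bdd(Y)$. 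I would also verify the boundary cases of the bottleneck definition when $r_1=r_2$, and confirm that the abundance of $\unbdd(Y)$-states is what licenses consuming each bounded state without reintroducing a $Q$-transition into the final configuration.
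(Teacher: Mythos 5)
Your pillars (i) and (ii) match the paper exactly: Theorem~\ref{thm:timer} plus the bottleneck lower bound (Observation~\ref{obs:bottleneck-linear-time}, Corollary~\ref{cor:bottlebeck-linear-time}) yield precisely the paper's Lemma~\ref{lem:pos-prob-states-expected-time}, and your use of Dickson's Lemma and the doubling hypothesis $2I' \subseteq I$ also appears verbatim in the paper. But pillar (iii), which you yourself flag as ``the main obstacle'' and ``where the real work lies,'' is the actual content of the theorem, and your sketch of it has a genuine hole: you never give a mechanism for the surgery. The paper's mechanism is the Transition Ordering Lemma (Lemma~\ref{lem:ordering}): a bottleneck-free path taking the states of $\Delta = \bdd((\vy_m))$ from large to small count must contain, for some ordering $d_1,\ldots,d_k$ of $\Delta$, transitions $\alpha_i: d_i,s_i \to o_i,o_i'$ with $s_i,o_i,o_i' \notin \{d_1,\ldots,d_i\}$, each occurring at least $(b_2-|\Lambda|\cdot b_1)/|\Lambda|^2$ times. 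Without this ordered, non-circular family, ``pushing the consuming segments to completion'' can simply fail: in the paper's own example, transitions~\eqref{rxn:claims-example-1} and~\eqref{rxn:claims-example-2} conserve the sum $a+c$, so no amount of appending them drives both $a$ and $c$ to zero. Moreover, the surgery is genuinely two-stage: first (Claim~\ref{claim:secondfixing}) one copy of $\vx_m$ is rerouted---by both adding \emph{and removing} instances of the $\alpha_i$, where the multiplicity lower bound is exactly what licenses removal---to emit a prescribed vector $\ve$ of extra agents, while a second copy serves as recoverable context $\vp$; only then (Claim~\ref{claim:firstfixing}) is $\ve$ spent as fuel to append transitions consuming all of $\Delta$. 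Your single-pass ``rerouting using the surplus $\vy_{n_{k+1}}-\vy_{n_k}$'' names no such mechanism, and your description of the second initial copy as ``a reservoir of $\unbdd(Y)$-agents'' is off: an initial configuration is supported on initial states and must itself be run through the protocol before it supplies anything useful. Also, bottleneck-freeness is not what keeps intermediate counts nonnegative---the extra agents $\ve$ and $\vp$ do that; bottleneck-freeness is what makes Lemma~\ref{lem:ordering} applicable in the first place.

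Separately, the difficulty you call the crux---verifying that the constructed $\vv$ remains $Q$-stable once the bounded states are removed---is not where the work is, and the way you propose to attack it (checking that no transition of $Q$ becomes newly enabled in $\vv$ or anything reachable from it) would be hopeless to carry out directly. The paper dispatches it with one structural observation: $Q$-stability is closed downward (any subconfiguration of a $Q$-stable configuration is $Q$-stable), and since $\vv$ is supported on $\Gamma = \unbdd((\vy_m))$ while the nondecreasing $\vy_m$ grow without bound on $\Gamma$, we have $\vv \leq \vy_{m'}$ for large enough $m'$, so $\vv$ is automatically $Q$-stable. Finding that observation, together with the ordering lemma above, is what converts your (correct) outline into a proof.
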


\shrinkBeforePar
In other words, if some states have ``small'' count in all reachable stable configurations, then there is a reachable stable configuration in which those states have count 0.
A PP $\calP$ that stably elects a leader is a PP in which $Q$ is the set of transitions that alter the count of $\ldr$, $I = \setr{\vi_n}{n\in\N}$  (note all $\vi_n$ are 1-dense and all of $I$ is closed under doubling), $Y$ is the set of configurations reachable from $I$ with a stable leader, and $\calP$ $Q$-stabilizes from $I$.
Hence by Theorem~\ref{thm:main-q-locking}, if $\calP$ stabilizes in expected time $o(n)$, there is a $\vv$ that is both stable and reachable, where $\vv(\ldr)=0$, a contradiction.
Thus Theorem~\ref{thm:main} follows from Theorem~\ref{thm:main-q-locking}.

We can also use Theorem~\ref{thm:main-q-locking} to prove that stable leader election requires linear time under the more relaxed requirement that there is a set $\Psi \subset \Lambda$ of ``leader states,'' and the goal of the PP is to reach a configuration $\vy$ in which $\sum_{\ldr \in \Psi} \vy(\ldr) = 1$ and stays 1 in any configuration reachable from $\vy$.
Choosing $Q$ as the set of transitions that alter that sum, Theorem~\ref{thm:main-q-locking} implies this form of stable leader election also requires $\Omega(n)$ expected time.

The rest of the paper is organized as follows.
We conclude Section~\ref{sec:result-statement} with two example PPs showing that sublinear time leader election \emph{is possible} if we relax the $\alpha$-dense requirement in Theorem~\ref{thm:main-q-locking}---showing that this requirement is indeed necessary.
These examples are particularly useful in discarding certain simple proof techniques that naturally come to mind. 
Then Section~\ref{sec:tools} develops the technical tools, which we use in Section~\ref{sec:main-proof} to complete the proof of Theorem~\ref{thm:main-q-locking}.
Throughout the rest of this paper, fix $\calP = (\Lambda,\delta)$, $\alpha$, $I$, and $Q$ as in the statement of Theorem~\ref{thm:main-q-locking}.

\subsection{Why Simple Proofs Fail}
\label{sec:difficulty}

It is tempting to believe that the main theorem follows by a simple argument based on reasoning about the last transition to change the count of the leader.
Indeed, if we start with more than one leader, and no transition rule can produce a new leader, then we can easily prove the impossibility of sublinear time leader election as follows.
To quickly reduce from two leaders to one, the other agent's state must be numerous in the population, so the same transition could occur again.
This would leave us with no leaders and no possibility to make a new leader.
However, if transitions \emph{can} produce new leaders, then the argument cannot reason only about the last transition involving the leader.
We illustrate this using two examples, which the authors have found helpful in ruling out plausible-sounding but ultimately insufficient ideas for proving a negative result.


%

We describe two  PPs that stably elect a leader in sublinear time starting from initial configurations that are not $\alpha$-dense (for $\alpha > 0$ independent of $n$).
(Since the initial configurations are not $\alpha$-dense these PPs do not contradict the statement of our main theorem.)
In both examples, with high probability exactly one transition involving the leader occurs.
In the first example the transition produces precisely one leader in a configuration that previously had none, whereas in the second example, it consumes precisely one leader in a configuration that previously had two.
(Clearly, these are the only two possible forms of the final transition involving the leader.)
The examples imply that any proof of the main result cannot be based solely on reasoning about the final transition, but must additionally establish that configurations such as the initial configurations of these PPs cannot be  reached with high probability in sublinear time.

Consider the following PP, with initial configuration $\vi$ given by
  $\vi(r) = n^{1/4}$,
  $\vi(x) = n - n^{1/4}$,
and transitions:
\begin{eqnarray}
  r,r &\to& \ldr,k \label{rxn-ldr-produce-1}
  \\
  r,k &\to& k,k \label{rxn-ldr-produce-2}
  \\
  x,k &\to& k,k \label{rxn-ldr-produce-3}
  \\
  \ell,\ell &\to& \ell,k \label{rxn-ldr-produce-4}
\end{eqnarray}
Transition~\eqref{rxn-ldr-produce-1} is the only one possible initially, and
it takes expected time $\Theta(n^{1/2})$ to occur for the first time, producing a single leader.
Transition~\eqref{rxn-ldr-produce-4} ensures that
if transition~\eqref{rxn-ldr-produce-1} occurs more than once,
the PP will eventually stabilize to a single leader.
However, with high probability transitions~\eqref{rxn-ldr-produce-2} and~\eqref{rxn-ldr-produce-3} consume all $r$ and $x$ \emph{before}~\eqref{rxn-ldr-produce-1} executes a second time.
After exactly one instance of transition~\eqref{rxn-ldr-produce-1} occurs,
let a \emph{speed fault} denote the event that transition~\eqref{rxn-ldr-produce-1} occurs again (this is the same speed fault concept studied in ref.~\cite{SpeedFaultsDISC}).
For convenience, for state $s \in \Lambda$, let $s$ also denote the count of that state in the configuration considered.

Conditioned on the next interaction being non-null, i.e., it is either a speed fault (transition~\eqref{rxn-ldr-produce-1}) or moves closer to converting all $x$ and $r$ to $k$ (transition~\eqref{rxn-ldr-produce-2} or~\eqref{rxn-ldr-produce-3}),
the probability of a speed fault in any particular configuration is $\frac{r(r-1)}{r(r-1) + 2k(n-k-1)} < \frac{n^{1/2}}{k(n-k-1)}$, since $r(r-1)$ is the number of ways of choosing two agents in state $r$ (leading to a speed fault), and $2k(n-k-1)$ is the number of ways of choosing an agent in state $k$ and another agent in state either $r$ or $x$, when $\ldr=1$ (increasing the count of $k$), and therefore $r + x = n-k-1$.
By the union bound, the probability that a speed fault occurs in between $k = 1$ and $k = n-1$ (at which point transition~\eqref{rxn-ldr-produce-1} is disabled and the PP stabilizes) is at most
\begin{align*}
  n^{1/2} \sum_{k=1}^{n-2} \frac{1}{k (n-k-1)}
  &= n^{1/2} O\left( \frac{\log n}{n} \right)
  \\ &= O\left( \frac{\log n}{n^{1/2}} \right).
\end{align*}

Whether or not a speed fault occurs, to produce $\ldr$, transition~\eqref{rxn-ldr-produce-1} must occur for the first time, taking expected time $O(n^{1/2})$.
Let $T$ be the random variable denoting the time to stabilization \emph{after} transition~\eqref{rxn-ldr-produce-1} has occurred for the first time.
If a speed fault occurs, then transition~\eqref{rxn-ldr-produce-4} must execute enough times to reduce $\ldr$ to 1, which requires expected time $O(n)$~\cite{angluin2006fast}.
Thus $\E[\text{T} | \text{speed fault}]= O(n)$.
Now to analyze $\E[\text{T} | \text{no speed fault}]$,
note that if only transitions~\eqref{rxn-ldr-produce-2} and~\eqref{rxn-ldr-produce-3} existed, then after producing a single $k$, the expected time for transitions~\eqref{rxn-ldr-produce-2} and~\eqref{rxn-ldr-produce-3} to convert all $x$ and $r$ into $k$ would be $O(\log n)$ (this is known as an ``epidemic''~\cite{angluin2006passivelymobile}).
If we consider that transition~\eqref{rxn-ldr-produce-1} is also competing with transitions~\eqref{rxn-ldr-produce-2} and~\eqref{rxn-ldr-produce-3}, and then we condition on transition~\eqref{rxn-ldr-produce-1} not occurring before all $r$ are converted to $k$, then this conditioning can only reduce this expected time.
Thus $\E[T | \text{no speed fault}] = O(\log n).$

Thus, the total expected time to stabilize to a single leader is at most
\begin{eqnarray*}
  && O(n^{1/2}) + \Pr[\text{speed fault}] \cdot \E[\text{T} | \text{speed fault}] 
  \\&&
  + \Pr[\text{no speed fault}] \cdot \E[\text{T} | \text{no speed fault}]
  \\&\leq&
  O(n^{1/2}) + O\left( \frac{\log n}{n^{1/2}} \right) \cdot O(n) + 1 \cdot O(\log n) 
  \\&=&
  O(n^{1/2} \log n),
\end{eqnarray*}
i.e., sublinear time.

The above PP uses a non-dense initial configuration since $\vi(r) = o(n)$.
Thus, although it does not directly contradict the existence of a linear time lower bound from dense configurations, it points out that any proof based on reasoning about the last transition to alter the count of $\ldr$ must disallow the possibility that a leader is elected from some \emph{intermediate} configuration in the manner described above.
With high probability all states obtain count $\Omega(n)$ in a constant amount of time;\footnote{This is the main theorem of~\cite{DotyTCRN2014}, of which Theorem~\ref{thm:timer} is a corollary.} however, it is possible to subsequently reduce some states to sublinear count after super-constant time.
\emph{A priori}, it is conceivable that after, say, $O(\log n)$ time, the PP reaches a non-dense configuration, with $\ldr = 0$ and $r \approx n^{1/4}$ similar to $\vi$ above, which would then elect a leader in sublinear time by producing a single $\ldr$ with high probability.

This example shows the difference between convergence and stabilization.
Assuming no speed fault occurs, the PP converges when the first transition~\eqref{rxn-ldr-produce-1} occurs, but it does not stabilize until transition~\eqref{rxn-ldr-produce-2} reduces the count of $r$ below 2, disabling transition~\eqref{rxn-ldr-produce-1} from occurring again.

We now consider another example PP.
Even if the final change of $\ldr$ takes it from 2 to 1, it is \emph{a priori} conceivable that the final transition $r,\ldr \to p_1,p_2$ to consume $\ldr$ has count $o(n)$ of $r$, so that, although a second execution of the transition is \emph{possible}, the second execution requires sufficiently long  expected time that the system, in the meantime, likely consumes all remaining copies of $r$, along with a mechanism to ensure that a leader is elected even if the second leader is also consumed by an $r$.
The following PP achieves this, with initial configuration $\vi$ given by
  $\vi(\ldr)=2$,
  $\vi(r)=n^{1/2}$,
  $\vi(x)=n-\vi(r)-\vi(\ldr)$,
with transitions
  \begin{eqnarray}
    r,\ldr &\to& r,\ldr' \label{rxn:leader-consume-1}
    \\
    \ldr',x &\to& \ldr',k \label{rxn:leader-consume-2}
    \\
    k,x &\to& k,k \label{rxn:leader-consume-3}
    \\
    k,r &\to& k,k \label{rxn:leader-consume-4}
    \\
    \ldr',\ldr' &\to& \ldr,k \label{rxn:leader-consume-5}
  \end{eqnarray}
An analysis similar to the previous PP shows that the expected time to stabilize to $\ldr=1$ is $O(n^{1/2} \log n)$.
Informally, transition~\eqref{rxn:leader-consume-1} consumes one copy of $\ldr$ after expected time $O(n^{1/2})$.
Transition~\eqref{rxn:leader-consume-2} subsequently produces $k$ in expected time $O(1)$, and transitions~\eqref{rxn:leader-consume-3} and~\eqref{rxn:leader-consume-4} remove all $r$ and $x$ in expected time $O(\log n)$.
With high probability this happens before transition~\eqref{rxn:leader-consume-1} can execute a second time, but if not, then $\ldr' = 2$, so transition~\eqref{rxn:leader-consume-5} guarantees that a single leader is stably elected (as above, if this is needed, it requires expected time $\Omega(n)$,  but it is needed with such low probability that the overall expected time remains sublinear).

In summary, the two PPs above demonstrate that the proof cannot be based solely on reasoning about the final transition to alter $\ldr$, no matter whether that transition increases $\ldr$ from 0 to 1 or decreases it from 2 to 1.

\shrinkBeforePar
\section{Technical Tools}
\label{sec:tools}

\shrinkBeforePar
\subsection{Bottleneck Transitions Require Linear Time}

This section proves a straightforward observation used in the proof of our main theorem.
It states that, if to get from a configuration $\vx \in \N^\Lambda$ to some configuration in a set $Y \subseteq \N^\Lambda$, it is necessary to execute a transition $r_1,r_2 \to p_1,p_2$ in which the counts of $r_1$ and $r_2$ are both at most some number $b$, then the expected time to reach from $\vx$ to some configuration in $Y$ is $\Omega(n/b^2)$.


Let $b\in\N$.
We say that transition $\alpha: r_1,r_2 \to p_1,p_2$ is a \emph{$b$-bottleneck} for configuration $\vc$ if $\vc(r_1) \leq b$ and $\vc(r_2) \leq b$.
We say a transition sequence $q$ such that $\vx \reach_q \vy$ has a $b$-bottleneck transition if some transition in $q$ is a $b$-bottleneck for the configuration where it is applied.

\newcommand{\obsBottleneck}{
Let $b\in\N$, $\vx \in \N^\Lambda$, and $Y \subseteq \N^\Lambda$ such that $\Pr[\vx \reach Y]=1$.
If every transition sequence taking $\vx$ to a configuration $\vy \in Y$ has a $b$-bottleneck transition, 
then $\time{\vx}{Y} \geq \frac{n-1}{2(b \cdot |\Lambda|)^2}$.}

\shrinkBeforePar
\begin{obs}\label{obs:bottleneck-linear-time}
  \obsBottleneck
\end{obs}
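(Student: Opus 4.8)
The plan is to bound the probability that any single interaction executes a given $b$-bottleneck transition, and then use the fact that such a transition must occur on every path to $Y$ to lower-bound the expected number of interactions. First I would fix a configuration $\vc$ with $\|\vc\|=n$ and a transition $\alpha:r_1,r_2\to p_1,p_2$ that is a $b$-bottleneck for $\vc$, meaning $\vc(r_1)\le b$ and $\vc(r_2)\le b$. The number of ordered pairs of agents that trigger $\alpha$ is at most $\vc(r_1)\cdot\vc(r_2)\le b^2$ (or $\vc(r_1)(\vc(r_1)-1)\le b^2$ if $r_1=r_2$), while the total number of ordered pairs available for any interaction is $n(n-1)$. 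Hence the probability that the next interaction is \emph{this particular} bottleneck transition is at most $\frac{b^2}{n(n-1)}$.

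Next I would sum over all transitions that could be $b$-bottlenecks. Since there are at most $|\Lambda|^2$ distinct transitions, and each contributes at most $b^2$ favorable ordered pairs when it is a $b$-bottleneck for the current configuration, the probability that the next interaction is \emph{any} $b$-bottleneck transition (for whatever configuration the system currently occupies) is at most $\frac{(b\,|\Lambda|)^2}{n(n-1)}$. The key structural hypothesis is that every transition sequence from $\vx$ to $Y$ contains at least one $b$-bottleneck transition; therefore, along the random trajectory, the system cannot reach $Y$ until it has executed at least one interaction that is a $b$-bottleneck for the configuration in which it fires.

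I would then set up a simple counting/expectation argument. Let $N$ be the random number of interactions until the system first reaches $Y$. By the above, reaching $Y$ requires at least one $b$-bottleneck interaction to have occurred. Let $p=\frac{(b\,|\Lambda|)^2}{n(n-1)}$ be the per-step upper bound on the probability of a bottleneck interaction. Since each of the first $N$ interactions is a bottleneck with probability at most $p$ (regardless of history, as the bound holds in every configuration), the expected number of bottleneck interactions among the first $m$ steps is at most $mp$; requiring this to be at least $1$ before reaching $Y$ forces $\E[N]\ge 1/p = \frac{n(n-1)}{(b\,|\Lambda|)^2}$. Dividing by $n$ to convert to parallel time gives $\time{\vx}{Y}\ge \frac{n-1}{(b\,|\Lambda|)^2}$, which is even slightly stronger than the claimed $\frac{n-1}{2(b\,|\Lambda|)^2}$; the factor of $2$ in the statement comfortably absorbs the distinction between ordered and unordered pairs and any off-by-one slack, so I would present the cleaner bound and note it implies the stated one.

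The main obstacle is making the ``expected number of bottleneck steps before reaching $Y$ is at least one'' argument rigorous, since $N$ is a stopping time and the per-step bottleneck probability, while uniformly bounded by $p$, depends on the current configuration. The clean way to handle this is an optional-stopping or martingale-style argument: define $M_m$ to be the number of bottleneck interactions in the first $m$ steps minus $pm$, observe that the conditional expected increment of the bottleneck-count per step is at most $p$, so $M_m$ is a supermartingale, and apply optional stopping at $N$ (which is finite with probability $1$ since $\Pr[\vx\reach Y]=1$). This yields $\E[\text{bottleneck count up to }N]\le p\,\E[N]$, and since that count is at least $1$, we conclude $\E[N]\ge 1/p$. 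I would be careful to invoke finiteness of the reachable configuration space to guarantee $\E[N]<\infty$ so that optional stopping applies without integrability concerns.
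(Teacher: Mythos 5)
Your proposal is correct and follows essentially the same approach as the paper: bound the per-interaction probability of executing any $b$-bottleneck transition via a union bound over the at most $|\Lambda|^2$ transitions, observe that the hypothesis forces at least one such interaction to occur before $Y$ can be reached, and convert the per-step probability bound into a lower bound on the expected number of interactions. Two points of comparison are worth noting. First, the concluding step: the paper simply observes that the number of interactions until the first bottleneck is dominated below by a geometric random variable with success probability at most $\frac{2(b\cdot|\Lambda|)^2}{n(n-1)}$, hence has expectation at least $\frac{n(n-1)}{2(b\cdot|\Lambda|)^2}$; your supermartingale/optional-stopping argument establishes the same inequality and is valid, just heavier machinery than needed. (Also, if $\mathrm{E}[N]=\infty$ the desired bound holds trivially, so the integrability worry can be dispatched by a case split rather than by appealing to finiteness of the reachable configuration space, though your route works too.) Second, a bookkeeping slip: since the transition function is symmetric, an interaction between an agent in state $r_1$ and an agent in state $r_2$ (with $r_1\neq r_2$) fires the transition regardless of which agent is drawn first, so the number of favorable ordered pairs is $2\,\vc(r_1)\vc(r_2)\le 2b^2$, not $b^2$. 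Consequently your per-step bound should be $\frac{2(b\cdot|\Lambda|)^2}{n(n-1)}$, and the ``slightly stronger'' bound $\frac{n-1}{(b\cdot|\Lambda|)^2}$ is not actually obtained; the corrected computation yields exactly the stated $\frac{n-1}{2(b\cdot|\Lambda|)^2}$, as you essentially concede when you say the factor of $2$ absorbs the ordered/unordered distinction. This slip does not affect the validity of the observation itself, only the claim of an improvement.
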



\newcommand{\proofObsBottleneck}{
\begin{proof}
  The probability that a particular transition $r_1,r_2 \to p_1,p_2$ occurs in any configuration $\vc$ where $\vc(r_1) \leq b$ and $\vc(r_2) \leq b$ is at most $\frac{2b^2}{n(n-1)}$.\footnote{If $r_1 \neq r_2$ and $\vc(r_1)=\vc(r_2)=b$, then the probability to pick the first agent in one of the states $r_1$ or $r_2$ is $\frac{2b}{n}$, and the probability to pick the second agent in the other state is $\frac{b}{n-1}$, so the total probability of both is $\frac{2b^2}{n(n-1)}.$ The case for $r_1=r_2$ gives $\frac{b}{n}$ for the first times $\frac{b-1}{n-1}$ for the second, resulting in lower total probability $\frac{b^2-b}{n(n-1)}$.}
  There are no more than $|\Lambda|^2$ different transitions.\footnote{With a nondeterministic transition function, the total number of transitions would replace the quantity $|\Lambda|^2$ in the conclusion, but it would remain a constant independent of the size of the initial configuration.}
  By the union bound, the probability that in any configuration $\vc$, any $b$-bottleneck transition occurs is no more than $|\Lambda|^2 \frac{2b^2}{n(n-1)}$.
  Thus we can bound the number of interactions until the first $b$-bottleneck transition occurs by a geometric random variable with success probability at most $\frac{2(b \cdot |\Lambda|)^2}{n(n-1)}$, whence the expected number of interactions until the first $b$-bottleneck is at least $\frac{n(n-1)}{2(b \cdot |\Lambda|)^2}$.
  Since the parallel time is defined as the number of interactions divided by $n$, this corresponds to $\frac{n-1}{2(b \cdot |\Lambda|)^2}$ expected time.
  By assumption, the set of configurations $Y$ is reached only after a $b$-bottleneck transition occurs.
  Therefore, the statement of the lemma follows.
\qed\end{proof}
}

\shrinkBeforePar
\opt{normal}{\proofObsBottleneck}
\opt{sub}{A proof of Observation~\ref{obs:bottleneck-linear-time} is given in the appendix.}
\opt{conf}{A proof of Observation~\ref{obs:bottleneck-linear-time} is given in the full version of this paper.}
\opt{sub,conf}{Intuitively, it follows because if two states $r_1$ and $r_2$ have count at most $b$, where $b$ is a constant independent of $n$, then we expect to wait $\Omega(n)$ time before agents in states $r_1$ and $r_2$ interact.}

\shrinkBeforePar
\begin{cor}\label{cor:bottlebeck-linear-time}
  Let $\gamma > 0$, $b\in\N$, $\vc\in\N^\Lambda$, and $X,Y \subseteq \N^\Lambda$ such that $\Pr[\vc \reach X] \geq \gamma$, $\Pr[\vc \reach Y] = 1$, and every transition sequence from every $\vx \in X$ to some $\vy \in Y$ has a $b$-bottleneck transition.
  Then $\time{\vc}{Y} \geq \gamma \frac{n-1}{2(b \cdot |\Lambda|)^2}.$
\end{cor}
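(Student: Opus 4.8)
The plan is to reduce Corollary~\ref{cor:bottlebeck-linear-time} directly to Observation~\ref{obs:bottleneck-linear-time} by a conditioning argument, extracting the factor $\gamma$ from the event that we first pass through the set $X$. The key conceptual point is that time is an expectation, and expectations decompose additively over a partition of the sample space; so I only need a lower bound on the expected time that holds conditioned on the run passing through $X$, and then multiply by the probability $\gamma$ of that event.

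First I would set up the quantity to be bounded. By definition $\time{\vc}{Y}$ is $1/n$ times the expected number of interactions to reach $Y$ from $\vc$. I would partition the runs according to whether they ever enter $X$ before reaching $Y$. On the event that a run does enter $X$, let $\vx \in X$ be the first configuration in $X$ encountered; by the strong Markov property of the interaction process (each next pair is chosen i.i.d.\ uniformly, so the future depends only on the current configuration), the expected number of remaining interactions to reach $Y$ from that point is exactly $n \cdot \time{\vx}{Y}$. Since every transition sequence from any $\vx \in X$ to $Y$ has a $b$-bottleneck transition, Observation~\ref{obs:bottleneck-linear-time} applies to each such $\vx$, giving $\time{\vx}{Y} \geq \frac{n-1}{2(b\cdot|\Lambda|)^2}$. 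Crucially this bound is uniform over all $\vx \in X$, so it survives taking the conditional expectation over which particular first-entry point $\vx$ is hit.

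Next I would assemble the pieces. The expected number of interactions from $\vc$ to $Y$ is at least the contribution from runs that pass through $X$, which is $\Pr[\vc \reach X]$ times the conditional expected number of interactions given entry into $X$ (the remaining terms, including the interactions before reaching $X$ and the runs avoiding $X$, only add nonnegative quantities). That conditional expectation is at least $n\cdot\frac{n-1}{2(b\cdot|\Lambda|)^2}$ by the uniform bound above, and $\Pr[\vc \reach X]\geq \gamma$. Dividing through by $n$ to convert interactions to parallel time yields $\time{\vc}{Y}\geq \gamma\,\frac{n-1}{2(b\cdot|\Lambda|)^2}$, as claimed. Note that $\Pr[\vc\reach Y]=1$ guarantees the hitting time of $Y$ is finite with probability $1$, so all these expectations are well defined.

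The main obstacle, and the step needing the most care, is the measure-theoretic bookkeeping around ``first entry into $X$'': I must ensure that the first-hitting configuration is a well-defined random variable and that the strong Markov property legitimately restarts the clock there, so that the uniform Observation~\ref{obs:bottleneck-linear-time} bound can be invoked pointwise in $\vx$ and then integrated against the conditional distribution of the entry point. One subtlety is that a run might reach $Y$ \emph{before} ever entering $X$; such runs simply contribute to the ``remaining'' nonnegative terms I am discarding, so they do no harm to a lower bound, but I would state explicitly that I am lower-bounding the total interaction count by the portion attributable to $X$-visiting runs. Everything else is a routine application of linearity of expectation and the already-proved Observation.
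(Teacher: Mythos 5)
Your overall strategy is exactly the intended one: the paper gives no explicit proof of Corollary~\ref{cor:bottlebeck-linear-time}, presenting it as an immediate consequence of Observation~\ref{obs:bottleneck-linear-time}, and the implicit argument is precisely your first-entry decomposition, Markov restart at the first configuration $\vx \in X$ visited, and the uniform bound over entry points. Your side remarks are also fine: for every $\vx \in X$ reachable from $\vc$ with positive probability, $\Pr[\vx \reach Y]=1$ follows from $\Pr[\vc \reach Y]=1$, so the Observation is legitimately applicable pointwise.

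The one genuine gap is the step where you dispose of the subtlety you yourself flag. Writing $T_X$ and $T_Y$ for the first visit times to $X$ and $Y$, your decomposition lower-bounds the expected interaction count by $\Pr[T_X < T_Y]$ times the restart bound: runs that visit $X$ only \emph{after} their first visit to $Y$ contribute nothing, since the post-$X$ time in such runs does not bound $T_Y$, which has already elapsed. The hypothesis $\Pr[\vc \reach X] \geq \gamma$ counts those runs too, so substituting $\gamma$ for $\Pr[T_X < T_Y]$ is not justified in general. Indeed, read literally the corollary is false: consider a protocol whose runs hit $Y$ on the first interaction and then, with probability $1$, are absorbed into a set $X$ from which $Y$ is unreachable; the bottleneck hypothesis holds vacuously, $\gamma = 1$, yet $\time{\vc}{Y} = 1/n$. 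What rescues the statement (and your proof) is a property enjoyed by every application in the paper: $Y$ is closed under reachability, because any configuration reachable from a $Q$-stable configuration is itself $Q$-stable. Given forward-closure, a run that hit $Y$ and later hit some $\vx \in X$ would force $\vx \in Y$, and then the empty transition sequence from $\vx$ into $Y$ has no $b$-bottleneck, contradicting the hypothesis; hence $X \cap Y = \emptyset$, no run can visit $Y$ before $X$, and $\Pr[T_X < T_Y] = \Pr[\vc \reach X] \geq \gamma$, which closes your argument. So you should either add this forward-closure observation as an explicit assumption (noting it holds for the set of $Q$-stable configurations) or read the hypothesis as a bound on the probability of reaching $X$ \emph{before} $Y$; this imprecision originates in the corollary's statement, but your proof inherits it at exactly the point where you discard the $Y$-first runs.
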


\shrinkBeforePar
\subsection{Sublinear Time from Dense Configurations Implies Bottleneck Free Path from Configurations with Every State ``Populous''}
The following theorem, along with Corollary~\ref{cor:bottlebeck-linear-time}, fully captures the probability theory necessary to prove our main theorem.\footnote{Theorem~\ref{thm:timer} was proven in a more general model for Chemical Reaction Networks (CRNs)  that obey a certain technical condition~\cite{DotyTCRN2014}. 
As observed in that paper, the class of CRNs obeying that condition includes all PPs, so the theorem holds unconditionally for PPs.
The theorem proved in~\cite{DotyTCRN2014} is more general than Theorem~\ref{thm:timer}, but we have stated a corollary of it here. A similar statement is implicit in the proof sketch of Lemma 5 of a technical report on a variant model called ``urn automata'' that has PPs as a special case~\cite{AngluinUrn03}.
}
Given it and Corollary~\ref{cor:bottlebeck-linear-time}, Theorem~\ref{thm:main-q-locking} is provable (through Lemma~\ref{lem:pos-prob-states-expected-time}) using only combinatorial arguments about reachability between configurations.

For ease of notation, we assume throughout this paper that all states in $\Lambda$ are \emph{producible}, meaning they have positive count in some reachable configuration.
Otherwise the following theorem applies only to states that are actually producible.
%
Recall that for $\alpha > 0$, a configuration $\vc$ is \emph{$\alpha$-dense} if for all $s\in\Lambda$, $\vc(s) > 0$ implies that $\vc(s) \geq \alpha \|\vc\|$.
Say that $\vc \in \N^\Lambda$ is \emph{full} if $(\forall s\in\Lambda)\ \vc(s)>0$, i.e., every state is present.
The following theorem states that with high probability, a PP will reach from an $\alpha$-dense configuration to a configuration in which all states are present (full) in ``large'' count ($\beta$-dense, for some $0 < \beta < \alpha$).

\begin{thm}[adapted from~\cite{DotyTCRN2014}]\label{thm:timer}
  Let $\calP = (\Lambda,\delta)$ be a PP and $\alpha > 0$.
  Then there are constants $\epsilon,\beta > 0$ such that,
letting
$X = \{\ \vx\in\N^\Lambda \ |\ \vx$ is full and $\beta$-dense $\}$,
  for all sufficiently large $\alpha$-dense configurations $\vi$,
  $\Pr[\vi \reach X]$
$\geq 1 - 2^{-\epsilon \|\vi\|}.$
\end{thm}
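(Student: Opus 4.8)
The plan is to split the proof into a purely combinatorial reachability stage and a probabilistic amplification stage, the latter being where the real work lies. For the combinatorial backbone I would exploit the standing assumption that every state of $\Lambda$ is producible. Set $L_0$ to be the set of states present in $\vi$ (each of count $\geq \alpha\|\vi\|$ since $\vi$ is $\alpha$-dense), and define inductively $L_{j+1} = L_j \cup \{p_1,p_2 : (r_1,r_2 \to p_1,p_2) \in \delta,\ r_1,r_2 \in L_j\}$. A routine induction on the length of a producing transition sequence shows every producible state lies in some $L_j$, so by finiteness $L_m = \Lambda$ for some $m \leq |\Lambda|$. Next I would prove a density-propagation claim: fix a small constant $c>0$ and densities $\alpha = \alpha_0 > \alpha_1 > \cdots > \alpha_m =: \beta$ with $\alpha_{j+1}=c\,\alpha_j$, and show that from any configuration in which every state of $L_j$ has count $\geq \alpha_j n$ there is a transition sequence reaching a configuration in which every state of $L_{j+1}$ has count $\geq \alpha_{j+1} n$ while every state of $L_j$ retains count $\geq \alpha_{j+1} n$. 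Since we only need the \emph{products} to cross the smaller threshold $\alpha_{j+1}n$, each production transition need only fire a tiny linear number of times, few enough that no reactant is depleted below $\alpha_{j+1}n$. Because $m \leq |\Lambda|$ is constant, $\beta = c^m\alpha > 0$, and iterating over layers gives reachability of a full, $\beta$-dense configuration.

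The hard part will be the probabilistic stage, since reachability is far weaker than the required $1-2^{-\epsilon\|\vi\|}$ bound. I would realize each layer's productions inside a window of $O(1)$ parallel time ($\Theta(n)$ interactions) and bound the per-layer failure probability by $2^{-\Omega(n)}$. The elementary building block is that a transition both of whose reactants have count $\geq \delta n$ fires in one uniformly random interaction with probability $\Omega(\delta^2)$; hence over $\Theta(n)$ interactions its number of firings stochastically dominates a binomial with mean $\Theta(n)$, and a Chernoff bound yields at least $\alpha_{j+1}n$ firings except with probability $2^{-\Omega(n)}$. A union bound over the constantly many transitions and the constant number $m$ of layers then keeps the total failure probability at $2^{-\Omega(n)}$, and choosing $\epsilon>0$ small enough gives $\Pr[\vi \reach X] \geq 1 - 2^{-\epsilon\|\vi\|}$ for all sufficiently large $\alpha$-dense $\vi$.

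The genuine obstacle I anticipate is that the reactant counts do not evolve deterministically: they form a complicated Markov chain, and competing transitions can consume the very reactants and products being tracked, so the per-step success probability is itself a correlated random quantity and a naive Chernoff bound does not apply directly. I would control this with a stopping-time and coupling argument: define the good event so that the reactant counts of the current layer stay above $\delta n/2$ until enough productions have occurred, observe that the target $\alpha_{j+1}n$ is crossed long before consumption could drive any reactant that far down (this is exactly why the layered densities must strictly decrease), and then couple the count of \emph{useful} firings from below to a biased random walk, or to a binomial with a uniformly positive per-step probability, to which the Chernoff estimate legitimately applies. This reduces the correlated process to a dominating independent one on the good event, and the complementary event has probability $2^{-\Omega(n)}$ by the same concentration. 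Assembling the layers in order, so that each lower layer remains populous throughout the short window in which the next is built, completes the amplification from mere reachability to the stated exponentially small failure probability.
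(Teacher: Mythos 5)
The paper contains no proof of this theorem to compare against: it is imported, as a special case, from the cited reference~\cite{DotyTCRN2014}, which proves a more general statement for chemical reaction networks satisfying a technical condition that all population protocols meet. What can be said is that your plan is a faithful reconstruction of the kind of argument used in that source: stratify states by production depth ($L_0 \subset L_1 \subset \cdots \subset L_m = \Lambda$ with $m \le |\Lambda|$), realize each layer inside a constant-length window of parallel time, and upgrade mere reachability to a $1 - 2^{-\Omega(n)}$ bound via concentration. Your combinatorial stage is correct (the induction showing every producible state lies in some $L_j$ is routine, and the geometric density schedule $\alpha_{j+1} = c\,\alpha_j$ absorbs the reactant consumption caused by the productions themselves), and you correctly name the real difficulty---the counts form a correlated Markov chain---together with the standard repair: a stopping time, a good event on which the relevant counts stay linear, and stochastic domination from below. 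Two points in your sketch deserve to be made explicit. First, the reason the good event holds with probability $1 - 2^{-\Omega(n)}$ is a multiplicative decay bound: in a single interaction a state's count can decrease only with probability $O(\mathrm{count}/n)$, so over a window of $Tn$ interactions a linear count falls by at most a constant factor (w.h.p.) provided the constant $T$ is small enough; it is this fact, not the decreasing density schedule per se, that makes constant-time windows suffice. Second, a Chernoff bound on the number of firings controls \emph{gross} production, but the newly produced state $p$ is itself subject to consumption by arbitrary competing transitions; you need the drift argument you allude to (production probability $\Omega(\delta^2)$ per interaction versus consumption probability $O(\mathrm{count}(p)/n)$, hence positive drift while $\mathrm{count}(p)$ sits below a fixed linear ceiling) to conclude that the \emph{net} count reaches $\alpha_{j+1} n$. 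Both points are standard and are covered by your stopping-time/coupling language, so I see no genuine gap, only details to discharge.
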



The following lemma reduces the problem of proving Theorem~\ref{thm:main-q-locking} to a combinatorial statement involving only reachability among configurations (and the lack of bottleneck transitions between them).
In Section~\ref{sec:main-proof} we will prove Theorem~\ref{thm:main-q-locking} by showing that the existence of the configurations $\vx_m$ and $\vy_m$ and the transition sequence $p_m$ in the following lemma implies that we can reach a $Q$-stable configuration $\vv \in \N^\Gamma$, where $\Gamma = \unbdd(Y)$ and $Y$ is the set of $Q$-stable configurations reachable from $I$.


\begin{lem}\label{lem:pos-prob-states-expected-time}
Let $\alpha > 0$.
  Let $\calP=(\Lambda,\delta)$ be a PP such that, for some set of transitions $Q$ and infinite set of $\alpha$-dense initial configurations $I$, $\calP$ $Q$-stabilizes from $I$ in expected time $o(n)$.
  Then for all $m\in\N$, 
  there is an $n_0$ such that for all $\vi \in I$ with $\|\vi\| \geq n_0$,
  there is a configuration $\vx_m$ reachable from $\vi$ and transition sequence $p_m$ such that:
  (1) $\vx_m(s) \geq m$ for all $s \in \Lambda$,
  (2) $\vx_m \reach_{p_m} \vy_m$, where $\vy_m$ is $Q$-stable, and
  (3) $p_m$ has no $m$-bottleneck transition.
\end{lem}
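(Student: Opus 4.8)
The plan is to prove this purely combinatorial existence statement by contradiction, using Theorem~\ref{thm:timer} to hand us a ``populous'' target set that is reached with constant probability, and Corollary~\ref{cor:bottlebeck-linear-time} to convert the \emph{absence} of a bottleneck-free path into a linear-time lower bound that clashes with the assumed $o(n)$ stabilization time. The two ingredients fit together exactly: the corollary wants a set $X$ reached with probability $\geq\gamma$, a set $Y$ reached with probability $1$, and the hypothesis that every path from $X$ to $Y$ has a $b$-bottleneck; Theorem~\ref{thm:timer} supplies such an $X$, and $Q$-stabilization supplies such a $Y$.

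First I would fix $m$ and invoke Theorem~\ref{thm:timer} to obtain constants $\epsilon,\beta>0$ and the set $X$ of \emph{full} $\beta$-dense configurations, with $\Pr[\vi \reach X] \geq 1 - 2^{-\epsilon\|\vi\|}$ for all sufficiently large $\alpha$-dense $\vi$. Then I would choose $n_0$ large enough that three conditions hold simultaneously for every $\vi \in I$ with $\|\vi\|\geq n_0$: (i) $\Pr[\vi \reach X] \geq \gamma$ for a fixed constant, say $\gamma = \tfrac12$ (which needs only $\|\vi\| \geq 1/\epsilon$); (ii) $n_0 \geq m/\beta$, so that any full $\beta$-dense $\vx$ reachable from $\vi$ satisfies $\vx(s) \geq \beta\|\vx\| = \beta\|\vi\| \geq m$ for every $s\in\Lambda$ (using that agent count is conserved, $\|\vx\| = \|\vi\|$), which is precisely condition~(1); and (iii) the sublinear bound $t(n)$ on $\time{\vi}{Y}$ satisfies $t(n) < \gamma\frac{n-1}{2(m|\Lambda|)^2}$, possible since $t(n)=o(n)$ while the right-hand side is linear for fixed $m,\gamma,|\Lambda|$.

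With $X_\vi$ denoting the full $\beta$-dense configurations reachable from $\vi$ (so $\Pr[\vi \reach X_\vi] \geq \gamma$) and recalling that $Q$-stabilization gives $\Pr[\vi \reach Y]=1$, I would then suppose for contradiction that no $\vx_m$ and $p_m$ as claimed exist. In particular every transition sequence from every $\vx \in X_\vi$ to a $Q$-stable configuration would have an $m$-bottleneck, so Corollary~\ref{cor:bottlebeck-linear-time}, applied with $b=m$, $\vc=\vi$, the set $X_\vi$, and $Y$, forces $\time{\vi}{Y} \geq \gamma\frac{n-1}{2(m|\Lambda|)^2}$, contradicting~(iii). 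Hence some $\vx_m \in X_\vi$ admits a bottleneck-free path $p_m$ to a $Q$-stable $\vy_m$; condition~(1) follows from~(ii), and conditions~(2) and~(3) are immediate. The main obstacle here is conceptual rather than computational: one must recognize that the $o(n)$ hypothesis should be contradicted at the level of the \emph{dense, full} configurations produced by Theorem~\ref{thm:timer}, not at the level of the initial configuration. The examples of Section~\ref{sec:difficulty} show exactly why—a protocol can leave the dense regime and briefly visit sparse configurations—so the bottleneck-freeness we extract is guaranteed only from a starting point $\vx_m$ in which every state is already populous. Once that populous starting point and the constant probability $\gamma$ are correctly in hand, everything else is the routine ``take $n_0$ large'' bookkeeping of~(i)--(iii).
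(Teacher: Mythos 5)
Your proposal is correct and follows essentially the same route as the paper's proof: both invoke Theorem~\ref{thm:timer} to obtain the set $X_{\vi}$ of full, $\beta$-dense configurations reachable from $\vi$ with probability at least $\frac12$ (with $n_0 \geq m/\beta$ yielding condition (1), and $Q$-stabilization yielding condition (2)), and both rule out the failure of condition (3) by applying Corollary~\ref{cor:bottlebeck-linear-time} with $b=m$, $\gamma=\frac12$ to contradict the assumed $o(n)$ stabilization time. The only cosmetic difference is that you fold the numeric threshold $t(n) < \gamma\frac{n-1}{2(m|\Lambda|)^2}$ into the choice of $n_0$ and derive the contradiction per initial configuration, whereas the paper phrases it as a statement about infinitely many $\vi \in I$; the substance is identical.
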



\begin{proof}
  Intuitively, the lemma follows from the fact that states $\vx_m$ are reached with high probability by Theorem~\ref{thm:timer}, and if no paths such as $p_m$ existed, then all paths from $\vx_m$ to a stable configuration would have a bottleneck and require linear time. Since $\vx_m$ is reached with high probability, this would imply the entire expected time is linear.

  For any configuration $\vx_m$ reachable from some configuration in $I$, there is a transition sequence $p_m$ satisfying condition (2) by the fact that $\calP$ $Q$-stabilizes from $I$.
  It remains to show we can find $\vx_m$ and $p_m$ satisfying conditions (1) and (3).

  By Theorem~\ref{thm:timer} there exist $\epsilon,\beta$ (which depend only on $\calP$ and $\alpha$) such that, starting in any sufficiently large initial configuration $\vi$,
  with probability at least $1 - 2^{-\epsilon n}$,
  $\calP$ reaches a configuration $\vx$ where all states $s \in \Lambda$ have count at least $\beta n$, where $n = \|\vi\|$.
  For all $\vi$, let $X_{\vi} = \{\vx\; |\; \vi \reach \vx \text{ and } (\forall s \in \Lambda) \, \vx(s) \geq \beta \|\vi\|\}$.
  Given any $m \in \N$, 
  let $n_0$ be a lower bound on $n$ such that: Theorem~\ref{thm:timer} applies for all $n \geq n_0$,  
$1 - 2^{-\epsilon n_0} \geq \frac{1}{2}$, and further $n_0 \geq m / \beta$.
Then,
for all $\vi \in I$ such that $\|\vi\| = n \geq n_0$,
$\Pr[\vi \reach X_{\vi}] \geq \frac{1}{2}$.
  Choose any $n \geq n_0$ for which there is $\vi \in I$ with $\|\vi\|=n$.
  Then any $\vx_m \in X_{\vi}$ satisfies condition (1): $\vx_m(s) \geq m$ for all $s \in \Lambda$.
  We now show that by choosing $\vx_m$ from $X_{\vi}$ for a large enough $n$, we can find a corresponding $p_m$ satisfying condition (3) as well.

  Suppose for the sake of contradiction that for some $m$ we cannot satisfy condition (3) when choosing $\vx_m$ as above, no matter how large we make $n$.
%
%
  This means that, letting $Y$ be set of $Q$-stable configurations, for infinitely many $\vi \in I$, 
  (and therefore infinitely many population sizes $n=\|\vi\|$), 
  all transition sequences from $X_{\vi}$ to $Y$ have an $m$-bottleneck.
  Then Corollary~\ref{cor:bottlebeck-linear-time}, 
  letting $\vc=\vi$, $\gamma=\frac{1}{2}$, $X = X_{\vi}$, and $b=m$, 
  tells us that 
  $\time{\vi}{Y} \geq \frac{1}{2} \frac{n-1}{2(m \cdot |\Lambda|)^2} = \Omega(n)$, a contradiction since $\calP$ is supposed to $Q$-stabilize from $I$ in expected time $o(n)$, i.e., $\time{\vi}{Y} = o(n)$.
\qed\end{proof}

\shrinkBeforePar
\subsection{Transition Ordering Lemma}

The following lemma was first proven (in the more general model of Chemical Reaction Networks) in~\cite{SpeedFaultsDISC}.
\opt{normal}{We provide a proof for the sake of self-containment.}%
\opt{sub}{We provide a proof in the appendix for the sake of self-containment.}
Intuitively, the lemma states that a ``fast'' transition sequence (meaning one without a bottleneck transition) that decreases certain states from large counts to small counts must contain transitions of a certain restricted form.
In particular the form is as follows: if $\Delta$ is the set of states whose counts decrease from large to small, then we can write the states in $\Delta$ in some order $d_1,d_2,\ldots,d_k$, such that for each $1 \leq i \leq k$, there is a transition $\alpha_i$ that consumes $d_i$, and every other state involved in $\alpha_i$ is either not in $\Delta$, or comes later in the ordering.
These transitions will later be used to do controlled ``surgery'' on fast transition sequences, because they give a way to alter the count of $d_i$, by inserting or removing the transitions $\alpha_i$, knowing that this will not affect the counts of $d_1,\ldots,d_{i-1}$.

\newcommand{\transitionOrdering}{
  Let $b_1,b_2 \in \N$ such that $b_2 > |\Lambda| \cdot b_1$.
  Let $\vx,\vy \in \N^\Lambda$ such that $\vx \reach \vy$ via transition sequence $q$ that does not contain a $b_2$-bottleneck.
  Define
  \opt{normal}{$$ \Delta = \setr{d \in \Lambda}{\vx(d) \geq b_2 \text{ and } \vy(d) \leq b_1 }. $$}
  \opt{sub,conf}{$\Delta = \{\ d \in \Lambda \ |\ \vx(d) \geq b_2$ and $\vy(d) \leq b_1 \ \}.$}
  Then there is an order on $\Delta$, so that we may write $\Delta = \{d_1,d_2,\ldots,d_k\}$, such that, for all $i \in \{1,\ldots,k\}$, there is a transition $\alpha_i$ of the form
  $d_i,s_i \to o_i,o'_i$, such that $s_i,o_i,o'_i \not \in \{d_{1},\ldots,d_i\}$, and $\alpha_i$ occurs at least $(b_2 - |\Lambda| \cdot b_1)/|\Lambda|^2$ times in $q$.
}

\begin{lem}[Adapted from~\cite{SpeedFaultsDISC}] \label{lem:ordering}
  \transitionOrdering
\end{lem}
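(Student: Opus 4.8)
The goal is to produce an ordering $d_1,\dots,d_k$ of $\Delta$ together with transitions $\alpha_i: d_i,s_i \to o_i,o'_i$ whose other states $s_i,o_i,o'_i$ all avoid $\{d_1,\dots,d_i\}$ and which each occur many times in $q$. My plan is to build this ordering \emph{greedily in reverse}, peeling off states of $\Delta$ one at a time starting from $d_k$, and at each stage choosing the next state to be one that is consumed by a ``bulk'' transition whose partners and products lie entirely outside the not-yet-ordered portion of $\Delta$.

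\textbf{Step 1: Set up the counting.} Fix $d \in \Delta$. Since $\vx(d) \geq b_2$ and $\vy(d) \leq b_1$, the net decrease in the count of $d$ along $q$ is at least $b_2 - b_1$. Each application of a transition changes the count of $d$ by at most $2$, but more usefully I want to track \emph{which} transitions are responsible for the net decrease. The transitions that consume $d$ must, in aggregate, outweigh those that produce $d$ by at least $b_2 - b_1$. Since there are at most $|\Lambda|^2$ transition types, by averaging at least one transition type consuming $d$ must fire at least $(b_2-b_1)/|\Lambda|^2$ times; more care gives the stated bound $(b_2 - |\Lambda|\cdot b_1)/|\Lambda|^2$.

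\textbf{Step 2: The greedy peel and the key selection argument.} I would order the set in reverse. Suppose $d_{i+1},\dots,d_k$ have been chosen, and let $\Delta_i = \Delta \setminus \{d_{i+1},\dots,d_k\}$ be the remaining states. I claim some $d \in \Delta_i$ has a heavily-used consuming transition $\alpha: d,s \to o,o'$ with $s,o,o' \notin \Delta_i$. The main obstacle, and the heart of the proof, is establishing this claim: I must rule out the possibility that \emph{every} bulk consumer of \emph{every} remaining $d \in \Delta_i$ involves another state of $\Delta_i$. The argument is a global counting/potential argument. Consider the total count $\sum_{d \in \Delta_i} (\text{count of } d)$. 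Every state in $\Delta_i$ starts large ($\geq b_2$) and ends small ($\leq b_1$), so this aggregate potential drops by at least $|\Delta_i|(b_2 - b_1)$. If every high-frequency consuming transition of a remaining state had a partner or product also in $\Delta_i$, one shows such transitions cannot account for the required net decrease without some state in $\Delta_i$ being driven negative or some partner state being over-consumed — because the $b_2$-bottleneck-free hypothesis forces the ``blocking'' states to also be abundant, and the threshold gap $b_2 > |\Lambda|\cdot b_1$ is exactly what prevents all $|\Delta_i|$ states from being simultaneously ``covered'' by internal transitions. Quantitatively, the $|\Lambda| \cdot b_1$ slack in the hypothesis $b_2 > |\Lambda|\cdot b_1$ is what guarantees at least one state can be peeled with an \emph{external} consuming transition.

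\textbf{Step 3: Assemble and verify frequency.} Having selected $d_i := d$ and $\alpha_i := \alpha$ at each stage, the construction terminates after $k$ steps with a full ordering. By construction $s_i,o_i,o'_i \notin \Delta_i = \{d_1,\dots,d_i\}$, which is precisely the required form. Finally I reconfirm the frequency bound: the chosen $\alpha_i$ was selected as a transition firing at least $(b_2 - |\Lambda|\cdot b_1)/|\Lambda|^2$ times, using the count deficit for $d_i$ restricted to transitions that do not also consume earlier-peeled states. The step I expect to be the genuine difficulty is the selection claim in Step 2 — proving that at each stage at least one remaining state admits an outward-pointing bulk consumer. Everything else is bookkeeping with Dickson-style monotonicity and the pigeonhole over the constantly-many transition types.
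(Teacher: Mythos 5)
Your overall skeleton --- building the ordering greedily in reverse, peeling one state of $\Delta$ at a time and requiring the peeled state to have a frequently-occurring consuming transition whose partner and outputs lie outside the not-yet-ordered set --- is exactly the structure of the paper's proof. But the selection claim in your Step 2, which you yourself flag as the heart of the matter, is not proven; it is only asserted, and the sketch you give of why it should hold does not work. You argue that if every bulk consumer of every remaining state had a partner or product inside $\Delta_i$, then some state would be ``driven negative'' or some partner ``over-consumed,'' and that bottleneck-freeness ``forces the blocking states to be abundant.'' Neither mechanism is sound as stated: counts never go negative along a valid transition sequence, and the states of $\Delta_i$ genuinely \emph{are} abundant during the early part of $q$ (each starts with count at least $b_2$), so there is nothing contradictory about many internal transitions among $\Delta_i$ states occurring early in $q$ without any $b_2$-bottleneck. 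The obstruction cannot be ruled out globally over all of $q$; it can only be ruled out on a suitably chosen suffix, and identifying that suffix is the missing idea.

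The paper's mechanism is the following. For the current remaining set $\Delta_i$, define the potential $\Phi_i(\vc) = \sum_{d \in \Delta_i} \vc(d)$, and restrict attention to the suffix $r$ of $q$ beginning at the \emph{last} configuration $\vc'$ along $q$ with $\Phi_i(\vc') \geq b_2$. After $\vc'$ the sum $\Phi_i$ stays below $b_2$, hence \emph{every individual} state of $\Delta_i$ has count below $b_2$ at every step of $r$. Now take any transition in $r$ that strictly decreases $\Phi_i$: it must consume some $d \in \Delta_i$; its other input cannot lie in $\Delta_i$, since then both inputs would have count below $b_2$ and the transition would be a $b_2$-bottleneck, contradicting the hypothesis; and both outputs must lie outside $\Delta_i$, since otherwise the decrease would not be strict. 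So every $\Phi_i$-decreasing transition in $r$ has exactly the required form and decreases $\Phi_i$ by exactly $1$. Since $\Phi_i(\vc') \geq b_2$ while $\Phi_i(\vy) \leq |\Delta_i| \cdot b_1 \leq |\Lambda| \cdot b_1$, at least $b_2 - |\Lambda| \cdot b_1$ such transitions occur in $r$, and pigeonhole over the at most $|\Lambda|^2$ transition types yields a single type $\alpha_i$ occurring at least $(b_2 - |\Lambda| \cdot b_1)/|\Lambda|^2$ times; its consumed input is then \emph{defined} to be $d_i$. Note this also repairs two smaller defects in your write-up: the peeled state $d_i$ is dictated by the frequent transition found (you cannot fix $d$ first and then look for its transition, as your Steps 1 and 2 suggest), and the frequency bound comes from the aggregate drop of $\Phi_i$ on the suffix, not from per-state averaging of the count deficit of $d_i$.
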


\newcommand{\proofTransitionOrdering}{
\begin{proof}
  We define the ordering based on increasing sets $\emptyset = \Delta_0 \subset \Delta_1 \subset \Delta_2 \subset \ldots \Delta_{k-1} \subset \Delta_k = \Delta$, where for each $1 \leq i \leq k$, $\Delta_{i} = \Delta_{i-1} \cup \{d_i\}$.

  We define the ordering inductively ``in reverse,'' by first defining $d_k$, then $d_{k-1}$, etc.
  For all $1 \leq i \leq k$, define $\Phi_i: \N^\Lambda \to \N$ for all configurations $\vc$ by $\Phi_i(\vc) = \sum_{d \in \Delta_i} \vc(d)$.
  $\Phi_k$ is well-defined since $\Delta_k=\Delta$, and $\Phi_i$ is well-defined once we have defined $d_{i+1},\ldots,d_k$, because $\Delta_i = \Delta \setminus \{d_{i+1},\ldots,d_k\}$.

  Because $\vy(d) \leq b_1$ for all $d \in \Delta$, it follows that $\Phi_i(\vy) \leq i \cdot b_1 \leq |\Lambda| \cdot b_1$.
  Recall that $\vx(d) \geq b_2$ for all $d \in \Delta$.
  Let $r$ be the suffix of $q$ after the last configuration $\vc'$ along $q$ such that $\Phi_i(\vc') \geq b_2$.
  Then in all configurations $\vc$ in $r$, $\vc(d) < b_2$ for all $d \in \Delta_i$.
  Because $\Phi_i(\vc') \geq b_2$ but $\Phi_i(\vy) \leq |\Lambda| \cdot b_1$, $r$ contains a subsequence $u$ of transitions, each of which strictly decreases $\Phi_i$, and the total decrease in $\Phi_i$ over all of $u$ is at least $(b_2 - |\Lambda| \cdot b_1)$ between configurations $\vc'$ and $\vy$.

  Let $\alpha: r_1,r_2 \to p_1,p_2$ be a transition in $u$.
  Since $\alpha$ strictly decreases $\Phi_i$, $r_1 \in \Delta_i$ or $r_2 \in \Delta_i$; assume without loss of generality that $r_1 \in \Delta_i$.
  Further, since $u$ does not contain a $b_2$-bottleneck, and all configurations $\vc$ along $u$ have $\vc(d) < b_2$ for all $d \in \Delta_i$, for $\alpha$ not to be a $b_2$-bottleneck, we must have $r_2 \not\in \Delta_i$.
  Since exactly one state in $\Delta_i$ decreases its count, $p_1 \not \in\Delta_i$ and $p_2 \not\in \Delta_i$, or else $\alpha$ would not decrease $\Phi_i$.
  Let $d_i = r_1, s_i = r_2, o_i = p_1$, and $o'_i = p_2$.

  Then $\alpha$ decreases $\Phi_i$ by exactly 1.
  Since there are at least $b_2 - |\Lambda| \cdot b_1$ instances of such transitions in $u$, and there are at most $|\Lambda|^2$ total types of transitions, by the pigeonhole principle at least one transition type must repeat in $u$ at least $(b_2 - |\Lambda| \cdot b_1)/|\Lambda|^2$ times.
\qed\end{proof}
}

\opt{sub,conf}{The intuition behind the proof is that the ordering is given (this is somewhat oversimplified) by the last time in $q$ the state's count drops below $b_2$. Each state in $\Delta$ must go from ``large'' count ($b_2$) to ``small'' count ($b_1$), so when a state $d_i$ is below count $b_2$, if a non-$b_2$-bottleneck transition $d_i,d_j\to\ldots$ occurs, then $d_j$ must exceed $b_2$.
This, in turn, means that state $d_j$ cannot yet have dropped below count $b_2$ for the last time, so $d_j$ is later in the ordering. The full argument is more subtle (and uses a different ordering) because it must establish that the transition's \emph{outputs} in $\Delta$ also come later in the ordering.}

\opt{normal}{\proofTransitionOrdering}

It is instructive to observe how Lemma~\ref{lem:ordering} can fail if the transition sequence $q$ contains a $b_2$-bottleneck.
Consider the linear-time leader election PP given by the transition $\ldr,\ldr \to \ldr,f$ with initial configuration $\vx = \{ n \ldr \}$ and final configuration $\vy = \{1 \ell, (n-1) f \}$.
In this case, $\Delta = \{ \ldr \}$, but once the count of $\ldr$ drops below $b_2$, subsequent transitions are $b_2$-bottlenecks.
Thus, the hypothesis of the lemma is not obeyed, and
indeed, there is no transition $\ldr,s \to o,o'$ such that $\ldr \not\in \{s,o,o'\}$.

\shrinkBeforePar
\section{Proof of Theorem~\ref{thm:main-q-locking}}
\label{sec:main-proof}



  Let $I' \subseteq I$ be an infinite subset of $I$ such that $2I' \subseteq I$.
  Recall Lemma~\ref{lem:pos-prob-states-expected-time}.
  We use it (letting $I$ in Lemma~\ref{lem:pos-prob-states-expected-time} be $I'$)
  to construct infinite sequences $(\vx_m)$ and $(\vy_m)$ of configurations and $(p_m)$ of paths as follows.
  Let $m \in \N$.
  Lemma~\ref{lem:pos-prob-states-expected-time} tells us that there is an $n_0$ such that for all $\vi \in I'$ with $\|\vi\| \geq n_0$,
  there is a configuration $\vx_m$ reachable from $\vi$ and transition sequence $p_m$ such that:
  (1) $\vx_m(s) \geq m$ for all $s \in \Lambda$,
  (2) $\vx_m \reach_{p_m} \vy_m$, where $\vy_m$ is $Q$-stable, and
  (3) $p_m$ has no $m$-bottleneck transition.


By Dickson's Lemma (Lemma~\ref{lem:dickson}) there is an infinite subsequence of values of $m$ for which both $(\vx_m)$ and $(\vy_m)$ are nondecreasing.
Without loss of generality, we take $(\vx_m)$, $(\vy_m)$, and $(p_m)$ to be these subsequences.
Let $\Delta = \bdd((\vy_m))$ and 
$\Gamma = \unbdd((\vy_m)) = \Lambda \setminus \Delta$.
Note that since each $\vy_m \in Y$ (the set of stable configurations reachable from some $\vi\in I$), we have that $\bdd(Y) \subseteq \bdd((\vy_m))$.
Thus,
we prove the theorem by showing that for infinitely many $\vy \in Y$, for all $s \in \Delta$, $\vy(s) = 0$.

Note that stability is closed downward: subsets of a $Q$-stable configuration are $Q$-stable.
For any fixed $\vv^\Gamma \in \N^\Gamma$, $\vv^\Gamma \leq \vy_{m}$ for sufficiently large $m$, by the definition of $\Gamma$ (the states that grow unboundedly in $\vy_m$ as $m\to\infty$). 
All $\vy_m$ are $Q$-stable.
Thus \emph{any} configuration $\vv^\Gamma \in \N^\Gamma$ is automatically $Q$-stable.
This is why Claims~\ref{claim:firstfixing},~\ref{claim:secondfixing}, and~\ref{claim:interleaving} of this proof center around reaching configurations that have count 0 of every state in $\Delta$.

\paragraph{Overview of Claims \ref{claim:firstfixing}--\ref{claim:interleaving}.}
Recall the path $\vx_m \reach_{p_m}  \vy_m$ from Lemma~\ref{lem:pos-prob-states-expected-time}.
Intuitively, Claim~\ref{claim:firstfixing} below says that because this path is $m$-bottleneck free,
Lemma~\ref{lem:ordering} applies, and its transitions can appended to the path to consume all states in $\Delta$ from $\vy_m$, resulting in a configuration $\vz_m^\Gamma$ that contains only states in $\Gamma$.
If this is possible directly as stated, this would correspond to the formal claim that $\vx_m \reach_{p_m} \vy_m \reach_{p'_m} \vz_m$, where $\vz_m \in \N^\Gamma$ contains no states in $\Delta$.
However, we do not know how to prove this directly (although it may be true).
Instead, we show in Claim~\ref{claim:firstfixing} that $\vy_m$ can reach to such a  $\vz_m^\Gamma \in \N^\Gamma$ \emph{if some extra agents in special states are supplied}, i.e., that there exists $\ve \in \N^\Lambda$ (the extra agents) such that $\vy_m + \ve \reach \vz_m^\Gamma$.
By additivity $\vx_m + \ve \reach_{p_m} \vy_m + \ve$, so $\vx_m + \ve \reach \vz_m^\Gamma$.

So where will these extra agents come from?
Although we talk about them as if they are somehow physically added, in actuality, we'll start with a larger initial configuration and ``guide'' some of the agents to the desired states that make up $\ve$.
Claims~\ref{claim:secondfixing} and \ref{claim:interleaving} explain how this happens.

Claim~\ref{claim:secondfixing}, also relying on Lemma~\ref{lem:ordering}, 
is a way to produce the states corresponding to $\ve$ needed for Claim~\ref{claim:firstfixing}.
Claim~\ref{claim:secondfixing} states, intuitively, that any such $\ve \in \N^\Lambda$ can be produced from $\vx_m$, as long as this is done ``in the context'' of extra states (corresponding to $\vp\in\N^\Lambda$).
However, unlike the extra states $\ve\in\N^\Lambda$ as used in Claim~\ref{claim:firstfixing}, the states in $\vp$ are ``recovered''.

To understand why Claim~\ref{claim:secondfixing} is useful, it is helpful to look how the proof of Claim~\ref{claim:interleaving} works.
The initial configuration $\vi$ in Claim~\ref{claim:interleaving} can be split into two halves $\vi'$ where $\vi = 2\vi'$.
We have $\vi' \reach \vx_m$, so $\vi \reach 2 \vx_m$.
The proof of Claim~\ref{claim:interleaving} works by employing Claim~\ref{claim:secondfixing} to produce $\ve$ from one copy of $\vx_m$ (while at the same time turning $\vx_m$ into $\vw^\Gamma \in \N^\Gamma$), using the other copy of $\vx_m$ as the ``context'' $\vp$ that allows Claim~\ref{claim:secondfixing} to work.
Once $\ve$ is produced and the second copy of $\vx_m$ is recovered, Claim~\ref{claim:firstfixing} is then used to show that $\vx_m +\ve \reach \vz^\Gamma$. 
In other words, having already eliminated $\Gamma$ states from the first copy of $\vx_m$, turning it into $\vw^\Gamma$, we use $\ve$ to eliminate $\Gamma$ states from the other copy of $\vx_m$, turning it into $\vz^\Gamma$.
Thus 
$\vi = 2\vi' \reach 2 \vx_{m} \reach \vx_{m} + \vw^\Gamma + \ve \reach \vz^\Gamma + \vw^\Gamma = \vv^\Gamma \in \N^\Gamma$.
We argued above that $\vv^\Gamma$ is $Q$-stable, proving Theorem~\ref{thm:main-q-locking}.

\begin{clm}   \label{claim:firstfixing}
There is $\ve \in \N^\Lambda$ such that for all large enough $m$,
there is $\vz^\Gamma_m \in \N^\Gamma$,
such that $\vx_m + \ve \reach \vz^\Gamma_m$.
\end{clm}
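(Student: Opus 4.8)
The plan is to use the Transition Ordering Lemma (Lemma~\ref{lem:ordering}) on the bottleneck-free path $p_m$ to extract, for each state of $\Delta = \bdd((\vy_m))$, a transition that consumes it while touching only later $\Delta$-states and $\Gamma$-states, and then to append to $p_m$ a ``cleanup'' segment that applies these transitions to drive every $\Delta$-state to count $0$. First I would fix a uniform bound $B$ with $\vy_m(d) \le B$ for all $d \in \Delta$ and all $m$ (such $B$ exists since $\Delta$ is finite and each of its states is bounded across $(\vy_m)$ by definition of $\bdd$). For $m > |\Lambda| \cdot B$, applying Lemma~\ref{lem:ordering} with $b_2 = m$ and $b_1 = B$ to $p_m$ yields an ordering $\Delta = \{d_1, \dots, d_k\}$ and transitions $\alpha_i : d_i, s_i \to o_i, o_i'$ with $s_i, o_i, o_i' \notin \{d_1,\dots,d_i\}$; the lemma's set of ``large-to-small'' states coincides with our $\Delta$ once $m$ is large, since $\vx_m(d)\ge m$ for every $d$ while each $\Gamma$-state exceeds $B$ in $\vy_m$ for large $m$. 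As there are only finitely many possible orderings and transition assignments, by the pigeonhole principle I would pass to a subsequence of $m$ on which the ordering $(d_1,\dots,d_k)$ and the transitions $\alpha_1,\dots,\alpha_k$ are all the same (harmless, since only infinitely many witnesses are ultimately needed).

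Next I would construct the fixed vector $\ve$ by an accounting that proceeds through the states in the order $d_1, d_2, \dots, d_k$. The point of this order is that $\alpha_i$ never produces or consumes any of $d_1, \dots, d_{i-1}$, so once those states are emptied they stay empty. ``Processing'' $d_i$ means applying $\alpha_i$ enough times to consume every copy of $d_i$ present; the number $c_i$ of such applications is bounded by $B + \ve(d_i)$ (the copies in $\vy_m + \ve$) plus at most $2\sum_{j<i} c_j$ (copies of $d_i$ created while processing earlier states), a quantity independent of $m$. Each application of $\alpha_i$ needs an agent in state $s_i$: if $s_i \in \Gamma$ these are plentiful for large $m$, and if $s_i = d_j$ with $j > i$ I charge the demand to $\ve$ by setting $\ve(d_j) := \sum_{i<j,\; s_i = d_j} c_i$. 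Because $\ve(d_i)$ and $c_i$ depend only on the $c_{i'}$ with $i' < i$, this recursion is well-founded and yields a finite $\ve \in \N^\Lambda$ (supported on $\Delta$) that is the same for every $m$ in the subsequence.

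With $\ve$ fixed, I would verify the cleanup: starting from $\vy_m + \ve$ and applying $\alpha_1,\dots,\alpha_k$ in order the prescribed numbers of times drives each $d_i$ to $0$ and never recreates an already-emptied state (since $s_{i'}, o_{i'}, o_{i'}' \neq d_i$ whenever $i' > i$), reaching some $\vz^\Gamma_m \in \N^\Gamma$. The partners are available throughout: $\Delta$-partners are supplied by $\ve$ by construction, while the total demand on any single $\Gamma$-state over the whole cleanup is bounded, so it is met because $\Gamma$-counts in $\vy_m$ grow without bound. Finally, by additivity of reachability, $\vx_m \reach_{p_m} \vy_m$ gives $\vx_m + \ve \reach_{p_m} \vy_m + \ve$, and concatenating the cleanup yields $\vx_m + \ve \reach \vz^\Gamma_m$, as claimed.

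I expect the main obstacle to be controlling the cascading demand for $\Delta$-state partners: consuming one state may require copies of another $\Delta$-state, whose consumption in turn requires still others, so a naive uniform budget for $\ve$ fails to close. The resolution is exactly the order supplied by Lemma~\ref{lem:ordering}: processing the states as $d_1, \dots, d_k$ makes every partner demand reference only later states, so the recursion defining $\ve$ and the counts $c_i$ terminates with bounds independent of $m$. Keeping $\ve$ a single vector valid for all large $m$ is the reason for first passing to a subsequence with a common ordering and common transitions $\alpha_i$.
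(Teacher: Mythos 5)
Your proposal is correct and takes essentially the same approach as the paper: apply Lemma~\ref{lem:ordering} to the bottleneck-free path $p_m$ with $b_2 = m$ and $b_1$ a uniform bound on the $\Delta$-counts in $(\vy_m)$, then append the transitions $\alpha_i$ in the given order, driving each $d_i$ to zero while charging partner demands to $\ve$ with budgets bounded independently of $m$ (the ordering guaranteeing emptied states are never touched again). The only deviations are minor: the paper places \emph{all} partners $s_i$ into $\ve$ (rather than drawing $\Gamma$-partners from the unbounded supply in $\vy_m$), and it obtains one $\ve$ valid for \emph{all} sufficiently large $m$ by noting that the nondecreasing sequence $(\vy_m)$ has eventually constant $\Delta$-counts, whereas your pigeonhole subsequence yields the claim only for infinitely many $m$---a weakening you correctly observe is harmless for the downstream argument.
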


\paragraph{Example.}
\newcommand{\hd}{\hat{d}}
We first illustrate Claim~\ref{claim:firstfixing} through an example.
Define a PP by the transitions
\opt{normal}{
\begin{eqnarray}
  b,a &\to& f,c  \label{rxn:claims-example-1}
  \\
  b,c &\to& f,a  \label{rxn:claims-example-2}
  \\
  a,c &\to& f,f  \label{rxn:claims-example-3}
  \\
  f,c &\to& f,b  \label{rxn:claims-example-4}
  \\
  f,b &\to& f,f  \label{rxn:claims-example-5}
\end{eqnarray}
}
\opt{sub,conf}{

\noindent%
\begin{minipage}[t]{0.5\textwidth}%
\vspace{-18pt}%
\begin{align}
  b,a &\to f,c  \label{rxn:claims-example-1}
  \\
  b,c &\to f,a  \label{rxn:claims-example-2}
  \\
  a,c &\to f,f  \label{rxn:claims-example-3}
\end{align}%
\end{minipage}\hfill%
\begin{minipage}[t]{0.5\textwidth}
\vspace{-18pt}%
\begin{align}
  f,c &\to f,b  \label{rxn:claims-example-4}
  \\
  f,b &\to f,f  \label{rxn:claims-example-5}
\end{align}
\end{minipage}
\vspace{2pt}

\noindent}For convenience, for state $s \in \Lambda$, let $s$ also denote the count of that state in the configuration considered.
Let configuration $\vx_m$ be where $f=100$, $a=100$, $b=100$, $c=100$.
Suppose a transition sequence $p_m$ without an $m$-bottleneck ($m=100$) takes the PP from $\vx_m$ to $\vy_m$, in which $a = 3$, $b=2$, $c=1$, and $f=394$.
Then in the language of Lemma~\ref{lem:ordering}, $\Delta=\{a,b,c\}$; these states go from ``large'' count in $\vx_m$ to ``small'' count in $\vy_m$.

Our strategy is to add interactions to $p_m$ in order to reach a configuration $\vz_m^\Gamma$ with $a=b=c=0$.
There are two issues we must deal with.
First, to get rid of $a$ we may try to add $3$ instances of \eqref{rxn:claims-example-1} at the end of $p_m$.
However, there is only enough $b$ for $2$ instances.
To eliminate such dependency,
in Claim~\ref{claim:firstfixing},
whenever we add a transition 
$b,a \to f,c$,
we add an extra agent in state $b$ to $\ve$.
(In general if we consume $r_2$ by adding transition $r_1,r_2 \to p_1,p_2$, we add an extra agent in state $r_1$ to $\ve$.)
Second, we need to prevent circularity in consuming and producing states.
Imagine trying to add more executions of~\eqref{rxn:claims-example-1} to get $a$ to 0 and more of~\eqref{rxn:claims-example-2} to get $c$ to 0; this will fail because these transitions conserve the quantity $a+c$.
To drive each of these states to 0, we must find some ordering on them so that each can be driven to 0 using a transition that does not affect the count of any state previously driven to 0.

Lemma~\ref{lem:ordering} gives us a way to eliminate such dependency systematically.
In the example above,
we can find the ordering $d_1 \equiv a$, $d_2 \equiv c$, and $d_3 \equiv b$,
with respective transitions~\eqref{rxn:claims-example-1} to drive $a$ to 0 (3  executions), \eqref{rxn:claims-example-4} to drive $c$ to 0 (4  executions: 1 to consume the 1 copy of $c$ in $\vy_m$, and 3 more to consume the extra 3 copies that were produced by the 3 extra executions of~\eqref{rxn:claims-example-1}), and~\eqref{rxn:claims-example-5} to drive $b$ to 0 (6 executions: 2 to consume 2 copies of $b$ in $\vy_m$, and 4 more to consume the extra 4 copies that were produced by the 4 extra executions of~\eqref{rxn:claims-example-4}).


\begin{proof}[of Claim~\ref{claim:firstfixing}]
Intuitively, the proof works as follows.
Recall that $\vx_m \reach_{p_m}  \vy_m$ and $p_m$ does not contain an $m$-bottleneck.
The goal is to get from configuration $\vy_m$ (which may be positive on some elements of $\Delta$) to $\vz^\Gamma_m$ (which is 0 on all elements of $\Delta$).
(Recall that $\Delta$ and $\Gamma$ partition the set of states $\Lambda$.)
We will show that we can append to the end of $p_m$ transitions $\alpha_i: d_i,s_i \to o_i,o'_i$, for $i \in \{1,\ldots,k\}$---in that order---such that for all $i$, $d_i \in \Delta$ and $o_i,o'_i \not\in \{d_1,\ldots,d_i\}$. 
We use Lemma~\ref{lem:ordering} to find the necessary transitions.
We add enough $\alpha_i$ transitions to consume all copies of $d_i$.
(We'll use $c_i$ to indicate the number of transitions added.)
However, this will also consume copies of $s_i$, so we add more agents in state $s_i$ to $\ve$ to account for this.
(We'll use $\ve_i$ to represent the additional copies of $s_i$ added, with eventually $\ve = \sum_{i=1}^k \ve_i$.)
Once we have added enough $\alpha_i$ transitions to make the count of $d_i$ equal to 0, by the fact that for all $j$, $o_j,o'_j \not\in \{d_1,\ldots,d_j\}$, subsequently added transitions $\alpha_{j}$ for $j > i$ will not produce $d_i$ (so its count will stay 0), nor will it require consuming $d_i$ (so the transitions will be applicable).
However, prior to reaching the point where we add $\alpha_i$ transitions, if $d_i = o_j$ or $o'_j$ for $j<i$, then the excess copies of $d_i$ generated by the extra $\alpha_j$ transitions mean that we may need to add more than $\vy_m(d_i)$ copies of $\alpha_i$ to consume all the copies of $d_i$.
The resulting configuration will be $\vz^\Gamma_m \in \N^\Gamma$.

More formally, we
choose large enough $m$ such that the counts of species in $\Delta$ are no longer changing with $m$ in $(\vy_m)$;
thus, the same $\ve$ and the same appended transitions will suffice for all larger $m$.
Recall that $\vx_m \reach_{p_m}  \vy_m$ and $p_m$ does not contain an $m$-bottleneck.
We'll apply Lemma~\ref{lem:ordering} on this path with $b_2 = m$ and let $b_1$ be the largest count of any species in $\Delta$ anywhere in the sequence $(\vy_m)$.
(If necessary, increase $m$ further to ensure $b_2 > |\Lambda| \cdot b_1$.)
Note that with these parameters, $\Delta = \bdd((\vy_m))$ exactly matches the set of states ``$\Delta$'' defined in the statement of Lemma~\ref{lem:ordering}.
Then this lemma tells us that there is an ordering on $\Delta$, so that we can write $\Delta=\{d_1,\ldots,d_k\}$, such that for each $1 \leq i \leq k$, there is a transition $\alpha_i: d_i,s_i \to o_i,o'_i$ such that $d_i \in \Delta$ and $s_i,o_i,o'_i \not\in \{d_1,\ldots,d_i\}$. 
(The fact that $s_i \not\in \{d_1,\ldots,d_i\}$ is not used in this Claim, but it will be essential for Claim~\ref{claim:secondfixing}).


For all $i\in\{1,\ldots,k\}$, let
\begin{itemize}
 \item $c_i = \vy_m(d_i) + \sum_{j=1}^{i-1} \{c_j o_j,c_j o'_j\}$
 (note that if $o_j=o'_j$, then $\{c_j o_j,c_j o'_j\} = \{2c_j o_j\}$)
 \item $\ve_i = \{c_i s_i\}$
\end{itemize}

Given a transition $\alpha$ and $j\in\N$, let $j \cdot \alpha$ denote the transition sequence consisting of $j$ copies of $\alpha$.
%
%
For all $i\in\{0,\ldots,k\}$, define $\vz_{m,i}$ as follows,
where $\vz_{m,0} = \vy_m$:
\begin{align*}
  \vz_{m,0} + \ve_1 &\reach_{c_1 \cdot \alpha_1} \vz_{m,1} \\
  \vz_{m,1} + \ve_2 &\reach_{c_2 \cdot \alpha_2} \vz_{m,2} \\
  \ldots \\
  \vz_{m,k-1} + \ve_k &\reach_{c_k \cdot \alpha_k} \vz_{m,k}.
\end{align*}

For all $i\in\{0,\ldots,k\}$, define path $p_{m,i}$ inductively to be $p_{m,i-1}$ followed by  $c_i \cdot \alpha_i$, where the base case is $p_{m,0} = p_m$, so that $\vx_m + \sum_{j=1}^i \ve_j \reach_{p_{m,i}} \vz_{m,i}$.
We prove by induction on $i$ that:
\begin{enumerate}
  \item\label{ind:pmvalid} $p_{m,i}$ is a valid transition sequence (i.e., it never has a transition in a configuration in which the input states are not present),
 \item\label{ind:zd0} for all $j \in \{1,\ldots,i\}$, $\vz_{m,i}(d_j)=0$, and
 \item\label{ind:zdc} for all $j \in \{i+1,\ldots,k\}$, $\vz_{m,i}(d_j) = \vy_m(d_j) + \left(\sum_{\ell=1}^{i} \{c_\ell o_\ell,c_\ell o'_\ell\} \right)(d_j)$ (in particular, $\vz_{m,i}(d_{i+1}) = c_{i+1}$, which is the amount that we remove in path $c_{i+1} \cdot \alpha_{i+1}$).
\end{enumerate}

The base case $i=0$ for~\ref{ind:pmvalid} follows from the fact that $p_m$ is a valid transition sequence to apply to $\vx_m$.
The base case is vacuous for~\ref{ind:zd0}.
For~\ref{ind:zdc}, observe that $\vz_{m,0}=\vy_m$ and the sum is empty when $i=0$.

Assume the inductive case for $i-1$.
%
%
We have $\vz_{m,i} = \vz_{m,i-1} + \ve_i + c_i \{o_i,o_i'\} - c_i \{d_i,s_i\}$
where $c_i \{o_i,o_i'\} - c_i \{d_i,s_i\}$ is the effect of applying transition $\alpha_i$ for $c_i$ times.
Since $\ve_i = \{c_i s_i\}$, we have
$\vz_{m,i} = \vz_{m,i-1} + c_i \{o_i,o_i'\} - c_i \{d_i\}$.
By the induction hypothesis~\ref{ind:zdc} for $i-1$,
$\vz_{m,i-1}(d_i) = c_i$.
Since $d_i \not \in \{o_i,o_i'\}$,
this implies that $c_i \alpha_i$ is a valid path (which establishes the inductive case for~\ref{ind:pmvalid}).
Further, for all $j \in \{1,\dots,i-1\}$, 
$d_j \not \in \{o_i,o_i',d_i\}$,
which implies that 
the amount of $d_j$ is not changed by $\alpha_i$, 
and by inductive hypothesis~\ref{ind:zd0}, for all $j \in \{1,\dots,i-1\}$, $\vz_{m,i}(d_j) = 0$.
Since also $\vz_{m,i}(d_i) = 0$, we establish the inductive case for~\ref{ind:zd0}.


Inductive case~\ref{ind:zdc} is proven as follows.
Let $j \in \{i+1,\ldots,k\}$. 
Induction hypothesis~\ref{ind:zdc} gives that $\vz_{m,i-1}(d_j) = \vy_m(d_j) + \left( \sum_{\ell=1}^{i-1} \{c_\ell o_\ell,c_\ell o'_\ell\} \right)(d_j)$.
Let $\vb = \{c_i o_i,c_i o'_i\}$ be the new term in the sum for the inductive case $i$; 
we must show that $\vz_{m,i} = \vz_{m,i-1} + \vb$.
Then $\vb(d_j)=0$ if $d_j$ is not an output state of $\alpha_i$, $\vb(d_j)=c_i$ if $d_j$ is exactly one output state, and $\vb(d_j)=2 c_i$ if $d_j$ is both output states.
Thus after applying $c_i \cdot \alpha_i$ to $\vz_{m,i-1} + \ve_i$ to result in configuration $\vz_{m,i}$, we have increased the count of $d_j$ by exactly $\vb(d_j)$, resulting in $\vz_{m,i}(d_j) = \vy_m(d_j) + \left( \sum_{\ell=1}^{i} \{c_\ell o_\ell,c_\ell o'_\ell\} \right) (d_j)$, proving the inductive case for~\ref{ind:zdc}.

To complete the proof we let $\vz_m^\Gamma = \vz_{m,k}$, 
for the final value $k$.
Inductive case~\ref{ind:zd0} on this final value $k$, 
shows that for all $j \in \{1,\ldots,k\}, \vz_{m,k}(d_j) = \vz^\Gamma_m(d_j) = 0$, thus proving that $\vz^\Gamma_m \in \N^\Gamma$.
Finally, note that $\ve = \sum_{i=1}^k \ve_i$ in the statement of the claim.
\qed\end{proof}

\newcommand{\cs}{c_s}
\newcommand{\cb}{c_b}  

Intuitively, Claim~\ref{claim:secondfixing} below works toward generating the vector of states $\ve$ that we needed for Claim~\ref{claim:firstfixing}.
The ``cost'' for Claim~\ref{claim:secondfixing} is that the path must be taken ``in the context'' of additional agents in states captured by $\vp$.
Importantly, the net effect of the path preserves $\vp$,
which will give us a way to ``interleave'' Claims~\ref{claim:firstfixing} and \ref{claim:secondfixing} as shown in Claim~\ref{claim:interleaving}.

\begin{clm}  \label{claim:secondfixing}
For all $\ve \in \N^\Lambda$,
there is $\vp \in \N^\Lambda$,
such that for all large enough $m$,
there is $\vw^\Gamma_m \in \N^\Gamma$,
such that
$\vp + \vx_m   \reach    \vp +   \vw^\Gamma_m + $ $\ve$.
\end{clm}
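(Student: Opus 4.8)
The plan is to prove Claim~\ref{claim:secondfixing} as a \emph{dual} of Claim~\ref{claim:firstfixing}, reusing the ordered transitions furnished by Lemma~\ref{lem:ordering}. Applying that lemma to the $m$-bottleneck-free path $\vx_m \reach_{p_m} \vy_m$ (with $b_2 = m$ and $b_1$ the largest count of any $\Delta$-state across $(\vy_m)$) again yields an ordering $\Delta = \{d_1,\ldots,d_k\}$ and transitions $\alpha_i : d_i,s_i \to o_i,o'_i$ with $s_i,o_i,o'_i \notin \{d_1,\ldots,d_i\}$, each occurring at least $(b_2 - |\Lambda|\cdot b_1)/|\Lambda|^2$ times in $p_m$. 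This count grows without bound as $m \to \infty$, which is exactly what lets a single fixed $\vp$ (depending only on $\ve$) absorb a bounded amount of ``surgery'' on $p_m$ for all large $m$. Two structural facts drive the argument: first, $\vx_m$ is full with every count $\geq m$, so for large $m$ we have $\vx_m \geq \ve$; thus $\ve$ is essentially already present and needs only to be \emph{preserved}, never created from nothing (consistent with conservation of total count, $\|\vx_m\| = \|\vw^\Gamma_m\| + \|\ve\|$). Second, every configuration in $\N^\Gamma$ is automatically $Q$-stable, so the only genuine task is to drive the $\Delta$-coordinates of the transformed copy down to exactly the level prescribed by $\ve$.

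Next I would run $p_m$ to reach $\vy_m$ and then, exactly as in Claim~\ref{claim:firstfixing}, append the transitions $\alpha_i$ in the order $i=1,\ldots,k$ to consume the remaining $\Delta$-states---except that each $d_i$ is driven down only to the value $\ve(d_i)$ rather than to $0$, so the $\Delta$-part of $\ve$ survives in the transformed copy, while the $\Gamma$-part of $\ve$ is simply retained from the abundant $\Gamma$-counts of $\vy_m$ and any remainder folded into $\vw^\Gamma_m$. The reactant deficits $s_i$ that these appended transitions create are covered from the context $\vp$. The ordering property $s_i,o_i,o'_i \notin \{d_1,\ldots,d_i\}$ guarantees (by the same induction carried out in Claim~\ref{claim:firstfixing}) that fixing $d_i$ never disturbs a coordinate $d_1,\ldots,d_{i-1}$ already set, so a single forward pass leaves the transformed copy in $\N^\Gamma$ together with the preserved $\ve$.

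The crux---and the sole reason $\vp$ is needed---is making $\vp$ \emph{catalytic}, i.e.\ recovered exactly. The borrowed reactants $s_i$ split into two cases. When $s_i \in \Gamma$ the situation is benign: $\Gamma$-states have unboundedly growing count and are over-produced as byproducts $o_j,o'_j$ of the conversion, so $\vp$'s $\Gamma$-reservoir can be refilled and any slack absorbed into the free choice of $\vw^\Gamma_m$. The delicate case is $s_i = d_j$ with $j>i$, an input that is itself a (later) $\Delta$-state and so cannot be produced freely. Here I would exploit the ordering a second time: such a debt is incurred only toward a state $d_j$ that is fixed strictly \emph{later} in the pass, and copy~$\vx_m$ carries $d_j$ in surplus (count $\geq m$) until its own turn, so the borrowed $d_j$ can be sourced and repaid before $d_j$ is driven to its target level, with no circular dependency precisely because every transition touching $d_i$ has its partner and products strictly later in the order. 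I expect this ordered catalytic bookkeeping---recovering $\vp$ exactly while simultaneously zeroing the unwanted $\Delta$-coordinates of the transformed copy and leaving precisely $\ve$---to be the main obstacle; once it is handled, choosing $\vp$ large enough (as a function of $\ve$ only) and $m$ large enough that each $\alpha_i$ occurs often enough to absorb the surgery yields $\vp + \vx_m \reach \vp + \vw^\Gamma_m + \ve$, which is then fed into Claim~\ref{claim:interleaving}.
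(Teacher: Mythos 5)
There is a genuine gap, and it sits exactly where your proposal treats things as routine: driving each $d_i$ ``down to $\ve(d_i)$'' by \emph{appending} instances of $\alpha_i$ after $p_m$. The claim quantifies over all $\ve \in \N^\Lambda$, and the $\ve$ that Claim~\ref{claim:firstfixing} hands to Claim~\ref{claim:secondfixing} can have $\Delta$-coordinates far exceeding the counts available once $p_m$ has run: since $\Delta = \bdd((\vy_m))$, every $\vy_m(d_i) \leq b_1$ is bounded independently of $m$, while $\ve(d_i)$ is arbitrary. You cannot drive a count ``down'' to a target lying above it, and appended transitions give no controlled way to push it up: each $\alpha_i$ consumes $d_i$, and any excess $d_i$ produced as a byproduct $o_j,o'_j$ of earlier fixes is whatever the bookkeeping forces, not a quantity you can dial up to meet $\ve(d_i)$. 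The paper's own example makes this failure unavoidable: there every transition preserves or decreases the sum $a+b+c$, so \emph{no} sequence of appended transitions can realize an $\ve$ whose total $\Delta$-count exceeds that of $\vy_m$ (e.g.\ $\ve$ with $a=7$ against $\vy_m$ with $a+b+c=6$). Your observation that $\vx_m \geq \ve$ for large $m$ does not rescue this, because your plan executes $p_m$ in full, and $p_m$ itself consumes the $\Delta$-states down to bounded counts; ``preserving'' part of $\vx_m$ would mean not executing some transitions of $p_m$ --- which is precisely the idea the proposal lacks. The paper's proof fixes each $d_i$ by a signed correction: if the current count is too high it appends instances of $\alpha_i$, but if it is too low it \emph{removes} instances of $\alpha_i$ from within $p_m$, un-consuming $d_i$. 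This is the real role of the guarantee in Lemma~\ref{lem:ordering} that each $\alpha_i$ occurs at least $(b_2 - |\Lambda|\cdot b_1)/|\Lambda|^2$ times in $p_m$ (a fact Claim~\ref{claim:firstfixing} never needed): it ensures there are enough instances to delete. Your proposal quotes this occurrence bound but assigns it the wrong job (sizing $\vp$).

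A second, smaller misunderstanding concerns $\vp$: it is not ``catalytic'' in the borrow-and-repay sense you describe, and your repayment mechanism for the case $s_i = d_j$ with $j > i$ is unsound --- at the point where transitions are appended, $d_j$ has small count (at most $b_1$ plus incidental excess), not count $\geq m$, and nothing in an appended suffix can ``repay'' a consumed $\Delta$-state. In the paper, $\vp$ is a passive buffer: the addition/removal surgery is accounted so that the \emph{net} effect of the modified path on $\vx_m$ alone is exactly $\vw^\Gamma_m + \ve$, with intermediate counts allowed to go temporarily negative in the accounting, and $\vp$ is carried along untouched solely to keep every intermediate configuration nonnegative. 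Its size can be chosen independent of $m$ because the total amount of surgery is bounded in terms of $\ve$ and $b_1$ only (the paper's $3^{i-1} c_b$ estimate), which is what permits one fixed $\vp$ to work for all sufficiently large $m$.
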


\paragraph{Example.}
Recall the example above illustrating Claim~\ref{claim:firstfixing}.
Claim~\ref{claim:secondfixing} is more difficult than Claim~\ref{claim:firstfixing} for two reasons.
First, we need to be able to obtain any counts of states $a$, $b$, $c$, $f$ in $\ve$, 
and not only ensure that $a = b = c = 0$.
Second, we no longer have the freedom to consume extra states (i.e., $\ve$ in Claim~\ref{claim:firstfixing}).
Note that $\vp$ cannot fulfill the same role as $\ve$ did in Claim~\ref{claim:firstfixing} because $\vp$ must be recovered at the end.

For concreteness, suppose $\ve$ consists of $a=7$, $b=2$, $c=0$, $f = 3$.
To start with, note that handling state $f$ in $\ve$ is easy: recall $f$ is in $\Gamma = \Lambda\setminus\Delta$ and is present in ``large'' count in $\vy_m$. 
We can simply ``siphon'' the required number of agents in state $f$ into $\ve$ leaving the rest as $\vw_m^\Gamma$. 
For the rest of $\ve$, recall that $\vy_m$ has  $a=3$, $b=2$, $c=1$.
How can we generate an additional 4 copies of $a$?
Note that all transitions preserve or decrease the sum $a+b+c$.
Thus we cannot solely add interactions to $p_m$ to get to our desired $\ve$.
The key is that we can increase $a$ by removing existing interactions from $p_m$ that consumed it.
Indeed, Lemma~\ref{lem:ordering} helps us by giving a lower bound on the number of instances of transitions \eqref{rxn:claims-example-1},\eqref{rxn:claims-example-4},\eqref{rxn:claims-example-5} that must have occurred in $p_m$.
(Note that in Claim~\ref{claim:firstfixing}, we didn't need to use the fact that these transitions occurred in $p_m$.
Now, we need to ensure that there are enough instances for us to remove.)

In our case, to increase $a$ by 4, we can remove 4 instances of interaction \eqref{rxn:claims-example-1} from $p_m$,
resulting in $a=7, b=6, c=-3$.\footnote{Note the need for $\vp$ to ensure that the total count never goes negative. In writing ``$a=7, b=6, c=-3$'', we are examining the effect only on $\vy_m$ of modifying $p_m$, but in applying the lemma, the starting configuration is $\vx_m+\vp$, not merely $\vx_m$, so the actual count of each state $s$ will be $\vp(s)$ larger than just stated.}
To get $c=0$ as desired, we can remove 3 instances of transition~\eqref{rxn:claims-example-4}, 
resulting in $a=7,b=3,c=0$.
Finally, we add $1$ instance of transition~\eqref{rxn:claims-example-5} to get $a=7,b=2,c=0$ as desired.

Note that unlike in Claim~\ref{claim:firstfixing}, we have more potential for circularity now because we cannot add the other input to a transition as $\ve$.
For example, we can't use transition \eqref{rxn:claims-example-3} to affect $c$ because it affects $a$ (which we have previously driven to the desired count).
Luckily, the ordering given by Lemma~\ref{lem:ordering} avoids any circularity because the other input and both of the outputs come later in the ordering.

Importantly, as we remove or add interactions to $p_m$, we could potentially drive the count of some state negative---but only temporarily because the \emph{final} counts ($\vw^\Gamma + \ve$) are nonnegative.
Performing these interactions in the context of more agents ($\vp$) ensures that the path is valid.

%


\begin{proof}[of Claim~\ref{claim:secondfixing}]
  Define $\ve^\Delta \in \N^\Delta$ by $\ve^\Delta(d) = \ve(d)$ for all $d \in \Delta$ and $\ve^\Delta(s) = 0$ for all $s \in \Gamma$, and define $\ve^\Gamma \in \N^\Gamma$ similarly, so that $\ve = \ve^\Delta + \ve^\Gamma$.
  (Recall that $\Delta$ and $\Gamma$ partition the set of states $\Lambda$.)
  We proceed by proving the following claim, which focuses only on the $\ve^\Delta$ part of $\ve$, and which additionally ensures that $\vw_m^\Gamma$ grows on all states in $\Gamma$ as $m \to \infty$:

 \noindent \textbf{(*)} \emph{For all $\ve^\Delta \in \N^\Delta$, there is $\vp \in \N^\Lambda$, such that for all large enough $m$, there is $\vw^\Gamma_m \in \N^\Gamma$, such that $\vp + \vx_m \reach \vp + \vw_m^\Gamma + \ve^\Delta$ and $\unbdd((\vw_m^\Gamma)) = \Gamma$.
 }

  Supposing this claim is true, it is easy to complete the proof of Claim~\ref{claim:secondfixing} by handling positive $\ve^\Gamma$ as follows.
  Since $\unbdd((\vw_m^\Gamma)) = \Gamma$, 
  given any $\ve^\Gamma$, $\vw_m^\Gamma-\ve^\Gamma$ is non-negative for large enough $m$.
  Then $\vp + \vx_m \reach \vp + \vw^\Gamma_m + \ve^\Delta = \vp + (\vw^\Gamma_m - \ve^\Gamma) + \ve$.
  In other words we apply claim (*) to produce $\ve^\Delta$, and then ``siphon'' the remaining states $\ve^\Gamma$ from $\vw_m^\Gamma$ to produce $\ve = \ve^\Delta + \ve^\Gamma$, which maintains the required conclusions of Claim~\ref{claim:secondfixing} with $(\vw_m^\Gamma-\ve^\Gamma)$ replacing $\vw_m^\Gamma$.
  
  We now show how to prove the above claim (*).
  Recall that $\vx_m \reach_{p_m}  \vy_m$ and $p_m$ does not contain an $m$-bottleneck.
  Intuitively, we will try to modify $p_m$ so that in the end we get exactly $\ve^\Delta$ of $\Delta$.
  As in the proof of Claim~\ref{claim:firstfixing}, we will use the fact that $p_m$ does not contain an $m$-bottleneck and Lemma~\ref{lem:ordering} to find transitions affecting $\Delta$ in a non-circular manner.
  However, unlike in  Claim~\ref{claim:firstfixing}, we cannot simply consume 
  additional states (i.e., $\ve \in \N^\Lambda$ in Claim~\ref{claim:firstfixing}) to ensure that the count of the ``other input state $s_i$'' does not become negative.
  Rather, to increase the amounts of $s_i$ we will remove certain transition instances originally in $p_m$.
  It turns out that even with removing transitions, our modification to $p_m$ may still \emph{temporarily} take certain states negative if we start from $\vx_m$.
  However, executing the path in the context of $\vp$ provides ``buffer room'' to ensure that no counts ever go below zero.

  More formally, as in the proof of Claim~\ref{claim:firstfixing} apply Lemma~\ref{lem:ordering} with $b_2 = m$ and let $b_1$ be the largest count of any state in $\Delta$ anywhere in the sequence $(\vy_m)$.
  The lower bound on $b_2 = m$ is determined below (``bound on the amount of fixing'').
  Lemma~\ref{lem:ordering} tells us that there is an ordering on $\Delta$, so that we can write $\Delta=\{d_1,\ldots,d_k\}$, such that for each $1 \leq i \leq k$, there is a transition $\alpha_i: d_i,s_i \to o_i,o'_i$ such that $d_i \in \Delta$ and $s_i,o_i,o'_i \not\in \{d_1,\ldots,d_i\}$, and $\alpha_i$ occurs at least $(b_2-|\Lambda|\cdot b_1)/|\Lambda|^2$ times in $p_m$.
  Note that the final condition was not necessary to prove Claim~\ref{claim:firstfixing} since its proof only added transitions to $p_m$.
  However, since the current proof removes transitions as well, we require this condition to ensure that there are sufficiently many existing instances to be removed.


  We iteratively fix the counts of states in $\Delta$ one by one, in the ordering given,
  i.e. we first adjust $p_m$ to fix $d_1$, then we fix $d_2$ (while showing that the fixing of $d_2$ cannot affect the count of $d_1$ in any configuration, so it remains fixed), etc.
  We start with $\ve^\Delta_0(s) = \vy_m(s)$ for $s \in \Delta$ and $\ve^\Delta_0(s)=0$ for $s \in \Gamma$, and $\vw^\Gamma_{m,0}(s) = \vy_m(s)$ for $s \in \Gamma$ and $\vw^\Gamma_{m,0}(s) = 0$ for $s \in \Delta$.
  Having fixed $d_1, \dots, d_{i-1}$, and obtaining new $\ve^\Delta_{i-1}$, $\vw^\Gamma_{m,i-1}$ (which could now be negative) such that $\ve^\Delta_{i-1}$ agrees with the desired $\ve^\Delta$ over $d_1, \dots, d_{i-1}$,
  we process $d_i$ as follows.
  If $\delta_i = \ve^\Delta(d_i) - \ve^\Delta_{i-1}(d_i) < 0$: add $\delta_i$ instances of transition $\alpha_i$ at the end of the transition sequence.
  If $\delta_i > 0$: remove $\delta_i$ instances of $\alpha_i$ where they occur in the transition sequence;
  property (3) ensures that $q$ contains enough instances of $\alpha_i$ (see below).
  Let $\ve^\Delta_{i}$ be the counts of the states in $\Delta$ at the end of this path.
  By property (2) and (3), adding or removing instances of $\alpha_i$ affects only the counts of states in $\Gamma$ and $d_{i+1},\ldots,d_k$.
  Since we fix these counts in the prescribed order, when we are done, the counts of each $d_i$ is equal to its count in $\ve^\Delta$ (ie $\ve^\Delta = \ve^\Delta_k$),
  while counts of elements of $\Gamma$ have been altered (letting $\vw_m^\Gamma = \vw_{m,k}^\Gamma$).
  We now claim that for large enough $m$, $\vw_{m,k}^\Gamma$ is nonnegative, and that $\vp$ can be independent of $m$.
  Finally, we derive a bound on the number of transition instances that we may need to remove, which determines another bound on $m$ (ie $b_2$) to ensure that there are enough instances by property (4) above.

  Note that the amount of fixing we need to do only depends on the desired $\ve^\Delta$ as well as on the counts of $\Delta$ states in $\vy_m$.
  Because $\vy_m$ are nondecreasing, and $\Delta = \bdd((\vy_m))$, for large enough $m$, the counts of $\Delta$ states in $\vy_m$ stop changing,
  and the amount of fixing depends only on the desired $\ve^\Delta$.
  This implies that the $\vp$ we need to add to ensure that no counts go negative can be independent of $m$.
  Further, for large enough $m$, the difference between $\vw^\Gamma_m$ and $\vy_m$ is independent of $m$,
  and thus $\unbdd(\vw^\Gamma_m) = \unbdd(\vy_m) = \Gamma$, as needed for claim~(*).
  This also implies that for large enough $m$, $\vw_{m,k}^\Gamma$ is nonnegative.

  \textit{Bound on the amount of fixing:}
  We now derive a bound the number of transition instances that must be added/removed, in order to justify that this bound depends only on $\ve$, but is independent of $m$.
  Define the quantity $\cb = \max\limits_{m\in\N,d\in\Delta} |\vy_m(d) - \ve^\Delta(d)|$ as the maximum amount that  any state in $\Delta$ deviates from its desired count. 
  We add or remove at most $|\delta_1| \leq \cb$ instances of $\alpha_1$, which affects the count of states in $\Gamma \cup \{d_2,\ldots,d_k\}$ by at most $2 \cb$ (it could be 2 per transition if the transition is $\alpha_1: d_1,s \to s',s'$ for some state $s' \in\Lambda$).
  Thus, $|\delta_2| \leq \cb + 2 |\delta_1| $ (the original $\cb$ error plus the additional error from altering the number of $\alpha_1$ transitions).
  In general, $|\delta_i| \leq \cb + 2 (|\delta_1| + \cdots + |\delta_{i-1}|)  \leq  3^{i-1} \cb$.
  Thus if we let $m = b_2 \geq k \cdot b_1 + 3^{k-1} \cb |\Lambda|^2$, we will have enough transition instances by property (4) to remove ($(b_2-|\Lambda|\cdot b_1)/|\Lambda|^2 = 3^{i-1} \cb$).
\qed\end{proof}

\shrinkBeforePar
\begin{clm}  \label{claim:interleaving}
For infinitely many $\vi \in I$, there is $\vv^\Gamma \in \N^\Gamma$ such that $\vi \reach \vv^\Gamma$.
\end{clm}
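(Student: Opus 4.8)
The plan is to obtain the desired $\vv^\Gamma$ by composing Claims~\ref{claim:firstfixing} and~\ref{claim:secondfixing}, exploiting the closure-under-doubling property $2I' \subseteq I$ together with the monotonicity of reachability: if $\va \reach \vb$ then $\va + \vc \reach \vb + \vc$, since any path applicable to $\va$ remains applicable in the presence of the extra agents $\vc$. First I would fix, once and for all, the vector $\ve \in \N^\Lambda$ supplied by Claim~\ref{claim:firstfixing}, and then feed \emph{this particular} $\ve$ into Claim~\ref{claim:secondfixing} to obtain the ``context'' vector $\vp \in \N^\Lambda$. Both $\ve$ and $\vp$ are independent of $m$, so I would next choose $m$ large enough that the conclusions of both claims hold and, crucially, that $\vp \leq \vx_m$; this last inequality is available because $\vx_m(s) \geq m$ for every $s \in \Lambda$ while $\vp$ is a fixed finite vector, so it holds for all sufficiently large $m$ (and persists since $(\vx_m)$ is nondecreasing).

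With $\ve$, $\vp$, and $m$ so chosen, recall from the construction of $(\vx_m)$ that $\vx_m$ is reachable from some $\vi' \in I'$, and that $\vi := 2\vi' \in I$. Since $\vi' \reach \vx_m$, additivity gives $\vi = 2\vi' \reach 2\vx_m = \vx_m + \vx_m$. I would then use one copy of $\vx_m$ to supply the context $\vp$: writing $2\vx_m = (\vp + \vx_m) + (\vx_m - \vp)$, which is legitimate because $\vp \leq \vx_m$, and applying Claim~\ref{claim:secondfixing} to the summand $\vp + \vx_m$ while carrying the spectators $\vx_m - \vp$ along unchanged, I obtain
\[
  2\vx_m \reach (\vp + \vw^\Gamma_m + \ve) + (\vx_m - \vp) = \vx_m + \vw^\Gamma_m + \ve .
\]
Now I would invoke Claim~\ref{claim:firstfixing}, which gives $\vx_m + \ve \reach \vz^\Gamma_m$ with $\vz^\Gamma_m \in \N^\Gamma$; carrying the spectators $\vw^\Gamma_m$ along, additivity yields $\vx_m + \vw^\Gamma_m + \ve \reach \vz^\Gamma_m + \vw^\Gamma_m$. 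Since both $\vz^\Gamma_m$ and $\vw^\Gamma_m$ lie in $\N^\Gamma$, so does their sum, so I set $\vv^\Gamma = \vz^\Gamma_m + \vw^\Gamma_m$. Chaining the three reachability steps gives $\vi \reach 2\vx_m \reach \vx_m + \vw^\Gamma_m + \ve \reach \vv^\Gamma \in \N^\Gamma$, exactly as required.

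To see that this holds for infinitely many $\vi \in I$, I would carry out the construction for every sufficiently large $m$ in the Dickson-selected subsequence; since agent count is conserved by every transition, $\|\vi'\| = \|\vx_m\| \geq m \cdot |\Lambda| \to \infty$, so the initial configurations $\vi = 2\vi'$ are pairwise distinct for infinitely many $m$. I expect the only real subtlety in this claim to be the bookkeeping that lets one copy of $\vx_m$ play the role of $\vp$ while the other is the genuine input of Claim~\ref{claim:secondfixing}, and this is precisely what $\vp \leq \vx_m$ secures. The genuine difficulty of the theorem lies not here but in Claims~\ref{claim:firstfixing} and~\ref{claim:secondfixing}, whose surgery on the bottleneck-free path $p_m$ via the Transition Ordering Lemma does the real work; Claim~\ref{claim:interleaving} is then a clean composition. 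Finally, combining this claim with the earlier observation that any configuration in $\N^\Gamma$ is a subset of $\vy_m$ for large $m$ and is therefore $Q$-stable completes the proof of Theorem~\ref{thm:main-q-locking}.
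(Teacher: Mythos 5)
Your proposal is correct and follows essentially the same route as the paper's own proof: fix $\ve$ from Claim~\ref{claim:firstfixing}, obtain $\vp$ from Claim~\ref{claim:secondfixing}, choose $m$ large enough that $\vp \leq \vx_m$, and chain $\vi = 2\vi' \reach 2\vx_m \reach \vx_m + \vw^\Gamma + \ve \reach \vw^\Gamma + \vz^\Gamma \in \N^\Gamma$ by additivity. Your extra bookkeeping (the explicit decomposition $2\vx_m = (\vp + \vx_m) + (\vx_m - \vp)$ and the agent-count argument for infinitely many distinct $\vi$) only makes explicit what the paper leaves implicit.
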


\begin{proof}
Intuitively, Claim~\ref{claim:interleaving} follows by expressing $\vi = 2\vi'$ where $\vi' \in I'$ and $\vi' \reach \vx_{m}$,
so $2 \vi' \reach 2 \vx_{m}$.
We then apply Claim~\ref{claim:secondfixing} to one copy of $\vx_{m}$ (with the other $\vx_{m}$ playing the role of $\vp$) to get to a configuration with the correct $\ve$ for Claim~\ref{claim:firstfixing},  and then apply Claim~\ref{claim:firstfixing} to remove all states in $\Delta$.


  Choose $m$ large enough to satisfy the conditions stated below as they are needed. 
  By Claim~\ref{claim:firstfixing}, there is $\ve \in \N^\Lambda$ and $\vz^\Gamma \in \N^\Gamma$ such that $\vx_{m} + \ve \reach_{p_{m}} \vz^\Gamma$.
  Apply Claim~\ref{claim:secondfixing} on $\ve$ (making sure $m$ is large enough to satisfy the claim on $\ve$).
  Thus, there is $\vp \in \N^\Lambda$ and $\vw^\Gamma \in \N^\Gamma$ such that
  $\vp + \vx_{m}   \reach_{p_{m}} \vp + \vw^\Gamma + \ve$.
  If $m$ is large enough that $\vx_{m} \geq \vp$, then
  $\vx_{m} + \vx_{m}  \reach_{p_{m}}  \vx_{m} + \vw^\Gamma + \ve$.
  Then, by Claim~\ref{claim:firstfixing},
  $\vx_{m} + \vw^\Gamma + \ve \reach_{p_{m}} \vw^\Gamma + \vz^\Gamma$.
  To complete the claim, we let $\vv^\Gamma = \vw^\Gamma + \vz^\Gamma.$
\qed\end{proof}

Finally, Theorem~\ref{thm:main-q-locking} is proven because $\vv^\Gamma$ is $Q$-stable and it  contains zero count of states in $\Delta$.
To see that $\vv^\Gamma$ is $Q$-stable recall that $\vv^\Gamma \leq \vy_{m'}$ for sufficiently large $m'$ since $\Gamma = \unbdd{((\vy_m))}$ and $\vv^\Gamma$ contains only states in $\Gamma$.
Since stability is closed downward, and $\vy_{m'}$ is $Q$-stable, we have that $\vv^\Gamma$ is $Q$-stable as well.


\opt{sub}{\newpage}
\opt{sub,conf}{\bibliographystyle{splncs03}}
\opt{normal}{\bibliographystyle{plain}}
\bibliography{tam}

\opt{sub}{
  \appendix
  \section{Appendix}

  \subsection{\difficultyTitle} \label{sec:difficulty}
  \difficulty

  \subsection{Proofs Omitted from Main Text}

  \reobs{\ref{obs:bottleneck-linear-time}}{\obsBottleneck}
  \proofObsBottleneck

  \relem{\ref{lem:pos-prob-states-expected-time}}{\lemPosProbExpTime}
  \begin{proof}
    \proofLemPosProbExpTime
  \qed\end{proof}

  \relem{\ref{lem:ordering}}{\transitionOrdering}
  \proofTransitionOrdering

  Recall the sequences $(\vx_m)$, $(\vy_m)$, and $(p_m)$, and the sets of states $\Delta = \bdd((\vy_m))$ and $\Gamma = \unbdd(\vy_m)$, defined just before Claim~\ref{claim:firstfixing} in Section~\ref{sec:main-proof}.

  \reclaim{\ref{claim:firstfixing}}{\claimFirstfix}
  \begin{proof}
  \proofClaimFirstfix
  \qed\end{proof}

  \reclaim{\ref{claim:secondfixing}}{\claimSecondfix}
  \begin{proof}
  \proofClaimSecondfix
  \qed\end{proof}

  \reclaim{\ref{claim:interleaving}}{\claimInterleaving}
  \begin{proof}
  \proofClaimInterleaving
  \qed\end{proof}

}
\end{document}